\title{A World without Pythons would be so Simple}
\author[1]{Netta Engelhardt,}
\author[2, 3]{Geoff Penington,}
\author[4]{and Arvin Shahbazi-Moghaddam}
\affiliation[1]{Center for Theoretical Physics, Massachusetts Institute of Technology, \\Cambridge, MA 02139, USA}
\affiliation[2]{Center for Theoretical Physics and Department of Physics,\\
University of California, Berkeley, CA 94720, U.S.A. and}
\affiliation[3]{Institute for Advanced Study, 1 Einstein Dr, Princeton, NJ 08540, U.S.A.} 
\affiliation[4]{Stanford Institute for Theoretical Physics,\\ Stanford University, Stanford, CA 94305 USA}
\emailAdd{engeln@mit.edu}
\emailAdd{geoffp@berkeley.edu}
\emailAdd{arvinshm@gmail.com}
\abstract{We show that bulk operators lying between the outermost extremal surface and the asymptotic boundary admit a simple boundary reconstruction in the classical limit. This is the converse of the Python's lunch conjecture, which proposes that operators with support between the minimal and outermost (quantum) extremal surfaces -- e.g. the interior Hawking partners -- are highly complex. Our procedure for reconstructing this ``simple wedge'' is based on the HKLL construction, but uses causal bulk propagation of perturbed boundary conditions on Lorentzian timefolds to expand the causal wedge as far as the outermost extremal surface. As a corollary, we establish the Simple Entropy proposal for the holographic dual of the area of a marginally trapped surface as well as a similar holographic dual for the outermost extremal surface. We find that the simple wedge is dual to a particular coarse-grained CFT state, obtained via averaging over all possible Python's lunches. An efficient quantum circuit converts this coarse-grained state into a ``simple state'' that is indistinguishable in finite time from a state with a local modular Hamiltonian. Under certain circumstances, the simple state modular Hamiltonian generates an exactly local flow; we interpret this result as a holographic dual of black hole uniqueness.}
\begin{document}
\maketitle

\section{Introduction} \label{sec:intro}

Recent developments in the black hole information frontier have pointed to a holographic geometrization of the degrees of freedom of the Hawking radiation~\cite{AEMM, Pen19, AlmMah19a, AlmHar19, PenShe19, BroGha19}. For an AdS black hole evaporating into a bath, the ``entanglement wedge of the radiation’’ after the Page time includes a large part of the black hole interior, bounded by the minimal quantum extremal surface (QES)~\cite{EngWal14}. 

This geometric description of the information naturally accounts for both the Page curve~\cite{Pag93b} and the Hayden-Preskill decoding criterion~\cite{HayPre07}. It also leads to a geometrical explanation for the expectation of Harlow-Hayden~\cite{HarHay13} that decoding Hawking radiation should be exponentially complex. Even though the interior degrees of freedom lie on the radiation side of the minimal QES and so lie within the radiation entanglement wedge, they are still hidden behind a nonminimal QES; in the case of the single-sided black hole, the nonminimal QES is simply the empty set. 
The region between the nonminimal and minimal extremal surfaces was dubbed ``the Python's lunch'' in~\cite{BroGha19}, because appropriate Cauchy slices in the bulk (quantum) geometry have a constriction at each extremal surface, together with a bulge in the middle (the eponymous ``lunch''). See Fig.~\ref{fig:lunchintro} for an illustration.
\begin{figure}
    \centering
\label{fig:lunchintro}
    \includegraphics[width=0.9\textwidth]{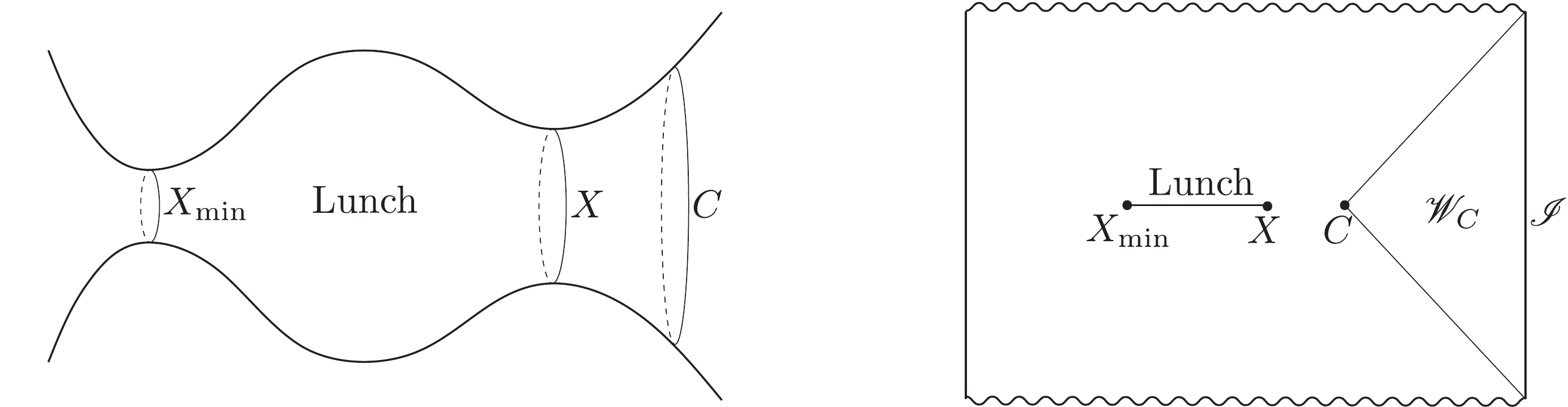}
    \caption{An illustration of the Python's lunch. On the left, the geometry of a Cauchy slice featuring the titular python's lunch between the two constrictions, the dominant QES $X_{\mathrm{min}}$ and the non-minimal QES $X$, both of which lie behind the causal surface $C$. On the right, a spacetime diagram of the same. }
\end{figure}

The claim of~\cite{BroGha19} was that any bulk operator with support in the interior of a Python's lunch should be exponentially difficult to decode, with an exponent that is controlled by the size of the bulge and grows as $O(1/G_N)$ in the semiclassical limit. The justification for this conjecture was based primarily on tensor network toy models, where the fastest known protocols for decoding operators inside a lunch use a Grover-search-based algorithm that takes exponential time. 

An additional important consistency check comes from the quantum focusing conjecture~\cite{BouFis15a}, which is the quantum avatar of classical gravitational lensing when the null energy condition is violated due to quantum corrections. Quantum focusing and global hyperbolicity ensure that no causal semiclassical Lorentzian evolution of the bulk geometry can result in causal communication from behind a Python's lunch to the asymptotic boundary, no matter what the asymptotic boundary conditions are. If such communication were possible, reconstruction of information from within the lunch could be implemented using only boundary time evolution with simple sources and the extrapolate dictionary relating bulk observables at the asymptotic boundary to local boundary operators. From a boundary perspective, this is a very simple procedure compared to the exponential complexity claimed to be necessary for operator reconstruction inside the lunch.

Without input from nonperturbative quantum gravity (such as entanglement wedge reconstruction), simple reconstruction using only low-complexity, causally-propagating operators and sources is all that semiclassical Lorentzian gravity is capable of: i.e. all that semiclassical gravity ``sees’’. Calculations and arguments that rely exclusively on semiclassical gravity with no further input (e.g. Hawking’s original calculation) are thus certainly restricted by the Python’s lunch proposal to recover no more than the domain of dependence between the outermost quantum extremal surface -- the ``appetizer'' of the lunch -- and the asymptotic boundary.

So how much does purely semiclassical gravity \textit{actually} recover? On the one hand, if the Python’s lunch conjecture is true, it is natural to expect that simple reconstruction can in fact obtain the entire bulk up to the outermost extremal surface. This ``converse'' to the Python's lunch conjecture is certainly true in tensor network toy models where anything not in a lunch can be reconstructed using a simple unitary circuit. Any gap in gravitational theories between the simply reconstructible region and the start of the lunch would therefore be somewhat puzzling and demand explanation. On the other hand, simple reconstruction appears to be little more than a glorified version of the HKLL procedure~\cite{HamKab05, HamKab06, HamKab06b}, which is supposed to recover just the so-called causal wedge: the region that can simultaneously send and receive signals from the asymptotic boundary. And generically the causal wedge and the outermost extremal wedge do not coincide.

To clarify this conundrum, let us first briefly review HKLL here, as it will be instrumental for our work in this paper. The HKLL procedure is a reconstruction protocol for bulk matter on a fixed background, in which the bulk fields (which can include gravitons~\cite{Hee12, KabLif12}, $\mathcal{O}(1/N)$ corrections~\cite{KabLif12, HeeMar12}, interactions~\cite{KabLif11}, and higher spins~\cite{KabLif12, SarXia14}) are obtained via a non-standard Cauchy evolution from their boundary counterparts (related to them via the extrapolate dictionary). Quantitatively,
\be \label{eq:HKLL}
\phi(x) =\int dX K(x;X)\mathcal{O}(X),
\ee
where $K(x;X)$ is a smearing function that depends on the spacetime geometry supported on the set of boundary points spacelike-separated to $x$.\footnote{Note that the validity of this non-standard ``rotated'' Cauchy problem is far from well-established (though see~\cite{Holmgren, Tat07} for proofs in certain cases). Our purpose here is not to put HKLL on a firm footing, but rather to show that reconstruction of the simple wedge is as simple as HKLL. We shall therefore assume HKLL, but any other simple reconstruction procedure for operators in the causal wedge would do just as well for our purposes.} The sense in which HKLL is ``simple'' is evident: from a boundary perspective it consists of boundary time evolution with local sources turned on. And local Hamiltonian evolution can be simulated efficiently using a quantum circuit.

The immediate prediction therefore, as expressed in~\cite{BouFre12}, is that HKLL can reconstruct operators within the causal wedge. We might then expect that the simply reconstructible region -- which we shall henceforth refer to as the \textit{simple wedge} -- is to be identified with the causal wedge. Since the causal wedge is always a subset of the outermost quantum extremal wedge~\cite{EngWal14} and is generically a proper subset~\footnote{As shown in \cite{AlmMah19b}, quantum effects can allow the causal wedge to be outside the outermost extremal wedge when defined using time evolution couples the asymptotic boundary to an auxiliary system. However, this is only true if we do not include the auxiliary coupled system when defining the outermost extremal wedge. When comparing apples to apples by doing so, one indeed finds that the causal wedge is still contained in the outermost extremal wedge.}, this leads to the undesirable no-man’s land between the simple wedge and the outermost quantum extremal surface.

To see deeper into the bulk, we need to expand the causal wedge via the addition of simple boundary sources as proposed in~\cite{EngWal17b, EngWal18}. In the very special case when the gap between the outermost extremal surface and the causal wedge is Planckian, \cite{Levine:2020upy} showed that certain causal unitaries produce just enough backreaction to maximally expand the causal past or future. However, in generic spacetimes the gap region is non-empty even in the classical limit and can in fact be arbitrarily large. It was conjectured in~\cite{EngWal17b, EngWal18} that it should be possible to fine-tune simple sources in order to ``turn off'' any extant focusing and so expand the causal wedge up to an apparent horizon, all without violating the null energy condition.

A central result of this paper is an explicit, constructive derivation of this fact -- in the limit where the bulk dynamics are classical and with a variety of matter fields. Furthermore, by evolving backwards and forwards in time using timefolds with different boundary conditions, one can continue to iteratively expand the causal wedge, from apparent horizon to apparent horizon, all the way to the outermost extremal surface. Combining this result with ordinary HKLL leads to simple reconstructions of arbitrary operators in the outermost extremal wedge.

It is easy to see how this works in the case of Jackiw-Teitelboim gravity~\cite{Jackiw,Tei83} minimally coupled to a (classical) massless free scalar. In this setup, the matter factorizes into left and right movers; by changing the boundary conditions, we may ``absorb'' the right movers and turn off focusing on the future event horizon; this pushes the future event horizon backwards. We can then repeat the same procedure for the past horizon by evolving backwards in time; this will now push the past causal horizon backwards. The shift will have likely revealed additional left-movers, so the procedure needs to be iteratively repeated until it converges on a stationary bifurcate horizon. This is illustrated in Fig.~\ref{fig:turnoffsources}. The generalization to higher dimensions is significantly more technically challenging -- rather than removing sources of focusing entirely, it is more practical to ``stretch out'' the focusing over the causal horizon and so dilute its effect -- but the essential intuition is the same.

\begin{figure}
    \centering
    \includegraphics[width=0.9\textwidth]{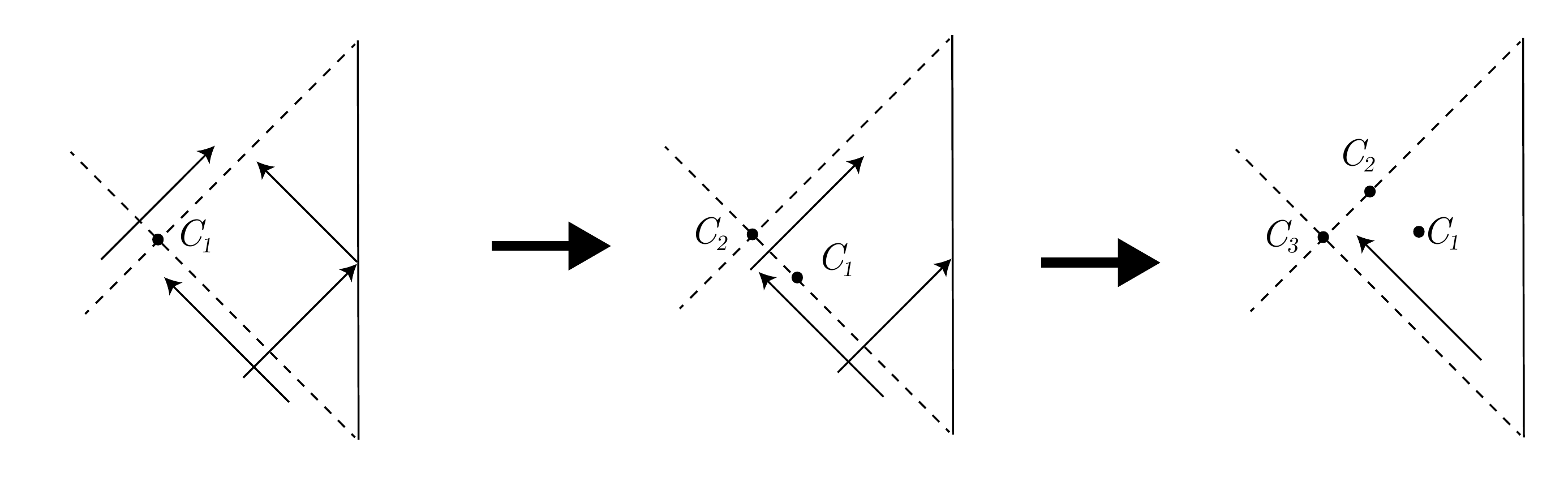}
    \caption{A caricature of the procedure used to push the causal wedge towards the appetizer in JT gravity coupled to a classical massless scalar. The leftmost panel is the original spacetime including left- and right-movers with reflecting boundary conditions. The causal surface of the right boundary is $C_{1}$. In the middle panel, the left movers have been turned off, which causes the future event horizon to shift inwards. The resulting causal surface is $C_{2}$, which is null-separated from $C_{1}$. The final panel shows that the right movers have been turned off, which causes the past event horizon to move inwards, shifting the causal surface to $C_{3}$. This shift reveals new left-movers in the causal wedge, which will have to be removed in subsequent zigzags along the past and future event horizons.}
    \label{fig:turnoffsources}
\end{figure}

The original motivation in \cite{EngWal17b} for attempting to expand the causal wedge using simple sources was to understand the holographic dual of the \emph{simple entropy}, defined as a maximization of the von Neumann entropy over all CFT density matrices with the same one-point functions -- with arbitrary time-ordered simple sources turned on after some initial time $t$ -- as the original CFT state. In other words, the simple entropy coarse-grains over all of the details of the state, except for simple observables that can be measured in the future of the initial time $t$. It was conjectured in \cite{EngWal17b, EngWal18} that the simple entropy is the boundary dual of the \emph{outer entropy}, a bulk quantity that coarse-grains over the geometry behind the outermost apparent horizon null-separated from the boundary at time $t$; it is equal to (one quarter of) the area of the apparent horizon. As a corollary of the results discussed above, we prove that this conjecture is indeed true whenever the bulk physics can be treated classically.

What if we generalize the definition of the simple entropy to allow  not just time-ordered insertions of simple operators, but insertions on arbitrary timefolds (and at arbitrary time)? In this case, there is no obstacle to seeing behind apparent horizons. By evolving the state backwards (and forwards) in time and then turning simple sources, it is possible to causally alter the spacetime near the apparent horizon, changing its location and ``seeing'' degrees of freedom that were originally hidden behind it. As per the discussion above, the first obstruction that cannot be bypassed in this way is the outermost extremal surface. Indeed, our results demonstrate that the simple entropy with arbitrary timefolds allowed is holographically dual to the area of the outermost extremal surface. Similarly, the density matrix $\rho_{\mathrm{coarse}}$ whose von Neumann entropy is the simple entropy with timefolds allowed reconstructs exactly the entire outermost extremal wedge and no more.  In fact, we can actually construct a complete spacetime in which the outermost extremal wedge is the entire entanglement wedge  of one connected asymptotic boundary obtained using the spacetime doubling procedure of~\cite{EngWal17b, EngWal18}; thus $\rho_{\mathrm{coarse}}$ is the actual CFT state  dual to the canonical purification as proposed in~\cite{EngWal17b, EngWal18} and proven in~\cite{DutFau19}.  See Fig.~\ref{fig:coarsegraining}.

\begin{figure}
    \centering
    \includegraphics[width=0.7\textwidth]{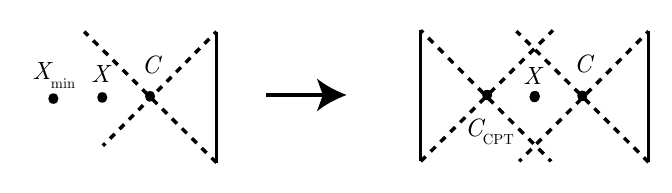}
    \caption{The coarse-graining procedure of~\cite{EngWal17b, EngWal18} as applied to the outermost extremal surface. The spacetime behind the outermost extremal surface $X$ is discarded and replaced with a CPT conjugate of the outermost extremal wedge. The rest of the spacetime is generated by standard Cauchy evolution.}
    \label{fig:coarsegraining}
\end{figure}

An immediate application of our result is then the construction of the CFT dual to the simple wedge in the final spacetime where the causal and outermost extremal wedges coincide. That is, this dual can be produced from $\rho_{\mathrm{coarse}}$ via a set of simple operations, with the dual bulk result being a two-sided black hole in which the bifurcation surface is extremal. The significance of this statement is manifold: we prove that the causal and entanglement wedges coincide if and only if the CFT state has a local modular Hamiltonian, which shows that finite time measurements cannot tell that the modular flow generated by the simple state is not local. In spacetimes with sufficient isotropy, the simple wedge CFT modular flow is in fact exactly local. This is analogous to a type of no-hair theorem: the set of holographic black holes with a stationary bifurcation surface is identical to the highly limited set of states with local modular Hamiltonians in the dual CFT.


From the perspective of holographic complexity, we may therefore interpret the absence of a Python’s lunch in the dual theory as the CFT state being related by a simple circuit to a rather special state with local modular flow (or at least indistinguishable from local in finite time). The world, it would seem, is rarely simple; pythons are ubiquitous. An explicit example of how a python might spring on an unsuspecting holographer in what would prima facie appear to be a python-less spacetime will be provided in our upcoming work~\cite{unexpected}. 

The paper is structured as follows. In Section~\ref{sec:SimpleWedge}, we define the outermost extremal wedge and the simple wedge, and we prove that the former is well-defined. In Section~\ref{sec:JT} we showcase our procedure for the simple case of JT gravity coupled to a massless (classical) scalar. In Section~\ref{sec:perturbation}, we prepare the perturbation that moves the causal horizon backwards along a the future event horizon in higher dimensional gravity (with arbitrary, null energy condition-satisfying matter), and we prove that the required perturbation satisfies the constraint equations. Section~\ref{sec:ZigZag} describes the zigzag portion of the procedure and completes the proof. We discuss the implications of our results, from the dual to the area of the outermost extremal surface to the nature of the simple state, in Section~\ref{sec:appetizer}. We finish with a discussion of generalizations and other implications in Section~\ref{sec:disc}.

\paragraph{Assumptions and Conventions:} The bulk spacetime $(M,g)$ is assumed to be classical with the dynamics governed by the Einstein field equation, i.e. we work in the large-$N$ and large $\lambda$ limit of AdS/CFT except where otherwise stated. We will assume the AdS analogue of global hyperbolicity~\cite{Wal12}. We also assume that the initial spacetime under consideration is one that satisfies the Null Energy Condition (NEC):
\begin{align}
T_{a b}k^{a} k^{b} \geq 0 
\end{align}
where $T_{ab}$ is the stress energy tensor and $k^a$ is any null vector. We will demonstrate that our perturbations of the spacetime maintain the NEC. All other conventions are as in~\cite{Wald} unless otherwise stated. 

\begin{itemize}

\item We shall use $J^{\pm}$ to refer to the bulk causal future and past and $I^{\pm}$ to refer to the bulk chronological future and past. Given a closed achronal set $S$, we use $D[S]$ to denote its domain of dependence, which we shall take to contain its boundary, as in~\cite{Wald}. $D^+[S]$ and $D^-[S]$ refer to the future and past components of the domain of dependence.

\item Hypersurfaces will refer to codimension-one embedded submanifolds of arbitrary signature.

\item By a ``surface'' we will always mean an achronal, codimension-two embedded submanifold which is Cauchy-splitting~\cite{BouEng15b}. Two surfaces $\sigma_1$ and $\sigma_2$ are homologous whenever there exists a hypersurface $H$ such that $\partial H =\sigma_1 \cup \sigma_2$. We will be primarily interested in surfaces homologous to (partial) Cauchy slices of the asymptotic boundary (CFT (sub)regions). 

\item Let $\Sigma$ be a Cauchy slice containing a surface $\sigma$ homologous to a boundary (sub)region $R$. By definition, $\sigma$ splits $\Sigma$ into two disjoint components that we will denote $\mathrm{Int}_\Sigma[\sigma]$ and $\mathrm{Out}_\Sigma[\sigma]$, where the conformal completion of the latter contains the boundary subregion $R$. We define $W_\sigma \equiv D[\mathrm{Out}_\Sigma[\sigma]]$, the outer wedge of $\sigma$. Similarly, we define $I_\sigma \equiv D[\mathrm{Int}_\Sigma[\sigma]]$, the inner wedge of $\sigma$. See also \cite{EngWal18}.

\item For a smooth surface $\sigma$ homologous to a boundary (sub)region, we denote by $k^a$ and $\ell^a$ the unique future-directed orthogonal null vector fields on the $C^{1}$ subsets of $\sigma$ pointing towards $\mathrm{Out}_\Sigma[\sigma]$ and towards $\mathrm{Int}_\Sigma[\sigma]$ respectively.

\item We define $\partial^+ W_{\sigma} = \partial D^+[\mathrm{Out}_\Sigma[\sigma]]$ and $\partial^- W_{\sigma} = \partial D^-[\mathrm{Out}_\Sigma[\sigma]]$. When $\sigma$ is smooth, $\partial^+ W_{\sigma}$ and $\partial^- W_{\sigma}$ can be constructed by firing null congruences starting from $k^a$ and $\ell^a$, terminating the congruence at caustics and non-local self-intersections~\cite{Wald, AkeBou17}.

\item Given any orthogonal null vector field $k^a$ on surface $\sigma$, $\theta_{(k)}$ denotes the expansion of $\sigma$ along $k^a$.\footnote{$k^{a}$ is not uniquely defined in places where $\sigma$ is not $C^{1}$. However, the expansion still has a definite sign, which can be computed via a limiting procedure.} We will refer to the following types of $\sigma$ based on its expansions:
\begin{itemize}


\item A compact $\sigma$ is trapped if $\theta_{(k)} < 0$ and $\theta_{(\ell)} < 0$, and marginally trapped if $\theta_{(\ell)} < 0$ and $\theta_{(k)}=0$.

\item $\sigma$ is \emph{extremal} if $\theta_{(k)}=0$ and $\theta_{(\ell)}=0$. By linearity, $\sigma$ is then stationary under deformations along any direction.

\end{itemize}

\item The future and past causal horizons associated to any boundary spacetime region $R \subset \mathscr{I}$ are defined as $\partial J^-[R]$ and $\partial J^+[R]$ respectively. By convention, we use $k^a$ and $\ell^a$ to refer to the generators of the future and past horizons respectively. More specifically, the future and past \textit{event} horizons are defined as $\mathscr{H}^+ \equiv \partial J^-[\mathscr{I}]$ and $\mathscr{H}^- \equiv \partial J^+[\mathscr{I}]$. The causal surface can be defined as $C \equiv \mathscr{H}^+ \cap \mathscr{H}^-$ and the causal wedge is $\mathscr{W}_C \equiv J^{+}[\mathscr{I}]\cap J^{-}[\mathscr{I}]$\cite{HubRan12}.\footnote{Note that although the term causal wedge is common in the literature, unlike the entanglement wedge it does not refer to a domain of dependence.} An important result in general relativity -- which follows from NEC and cosmic censorship -- is that future causal horizons satisfy an area law: the areas of their cross sections do not decrease as we move the cross section to the future~\cite{Haw71,HawEll}. In particular, this means that any congruence of null generators on a future horizon has nonnegative expansion. By time-reversal symmetry, a ``reverse'' area law holds for past horizons.

\item We define the terminated horizons $\mathscr{H}^+_C \equiv \mathscr{H}^+ \cap J^{-}[\mathscr{I}]$ and $\mathscr{H}^-_C \equiv \mathscr{H}^- \cap J^{-}[\mathscr{I}]$. These are natural definition for us since we are interested in perturbations of $\mathscr{H}^+$ and $\mathscr{H}^-$ caused by causal boundary sources.

\end{itemize}

\section{Which Wedge?} \label{sec:SimpleWedge}

Three bulk regions are under consideration here: the \textit{outermost extremal wedge}, the \textit{causal} wedge, and the \textit{simple} wedge. We will ultimately argue that the outermost extremal wedge is in fact the simple wedge, but in order to avoid subscribing to our own conclusions before we have demonstrated them, we introduce terminology that distinguishes between the two.

We will argue for the equivalence between the outermost extremal and simple wedges by showing that simple operations and sources (together with a finite number of time-folds) are sufficient to shift the causal wedge so that it comes arbitrarily close to coinciding with the outermost extremal wedge. While our primary results are for compact extremal surfaces, many of our intermediate results remain valid for boundary-anchored surfaces. In Sec.~\ref{sec:bdyanchored} we will discuss in more detail the extent to which our results apply to the latter case. 

We have already defined the more familiar causal wedge in the introduction. Let us now give a precise definition of the outermost extremal wedge and the simple wedge.

Intuitively, the extremal wedge is defined as the analogue of the entanglement wedge for the outermost extremal surface -- be it minimal or not. So before defining the outermost extremal wedge, we must prove that a unique outermost extremal surface exists in the first place:

\begin{prop}\label{prop-X}
If there exists more than one extremal surface homologous to a connected component of $\mathscr{I}$, then exactly one is outermost; i.e. there exists an extremal surface $X$ contained in the outer wedge $W_{X'}$ of all other extremal surfaces $X'$ homologous to $\mathscr{I}$. 
\end{prop}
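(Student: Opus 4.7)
The plan is to impose the partial order $X_1 \preceq X_2 \iff X_1 \subset W_{X_2}$ on the set $\mathcal{E}$ of extremal surfaces homologous to the given connected component of $\mathscr{I}$, and then establish the proposition in two stages. First I would prove the key \emph{comparability lemma}: the order $\preceq$ is in fact total on $\mathcal{E}$, so that any two extremal surfaces are related by outer-wedge containment. Second, I would extract a maximum element via a monotone-limit / compactness argument. Existence of an outermost surface and its uniqueness then follow at once.

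The heart of the proof is comparability. Given distinct $X_1, X_2 \in \mathcal{E}$, assume for contradiction that neither $W_{X_1} \subset W_{X_2}$ nor $W_{X_2} \subset W_{X_1}$. Pick a Cauchy slice $\Sigma$ threading both surfaces (achievable, where necessary, by sliding each $X_i$ a tiny null distance off itself onto a common slice while tracking its expansion via Raychaudhuri, which preserves $\theta_{(k)} \leq 0$ by NEC), and form the envelope
\begin{equation}
X_{\mathrm{env}} \;=\; \partial\bigl(\mathrm{Out}_\Sigma[X_1] \cup \mathrm{Out}_\Sigma[X_2]\bigr) \subset \Sigma.
\end{equation}
It is piecewise smooth: alternating pieces of (the evolved) $X_1$ and $X_2$ glued along seams where the two surfaces meet inside $\Sigma$. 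At the smooth pieces the outward expansion $\theta_{(k)}$ vanishes by extremality; at each seam $X_{\mathrm{env}}$ is concave outward, and smoothing yields a nearby deformation with strictly negative $\theta_{(k)}$. This creates a barrier for extremal surfaces from the outside, contradicting the tangency of $X_1$ (or $X_2$) to $X_{\mathrm{env}}$ on either side of the seam by the maximum principle for extremal surfaces: the Galloway--Andersson maximum principle applied to the outgoing null hypersurfaces generated off the relevant surfaces, combined with NEC, forces local coincidence of $X_1$ and $X_2$ near the seam, propagating to global identity by connectedness of the homology class --- contradicting $X_1 \neq X_2$.

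With comparability in hand, $\mathcal{E}$ is linearly ordered, and the union $\bigcup_{X \in \mathcal{E}} W_X$ is a candidate outermost outer wedge. A monotone increasing chain in $\mathcal{E}$ is confined to a compact region (all extremal surfaces lie behind the causal surface $C$, and a global-hyperbolicity-plus-boundary-anchored argument gives precompactness in a suitable topology on cross-sections of a fixed Cauchy slice), so it admits a subsequential $C^1$ limit surface; continuity of $\theta_{(k)}$ and $\theta_{(\ell)}$ guarantees that this limit is extremal and so lies in $\mathcal{E}$, making it the maximum element --- that is, the outermost extremal surface. Uniqueness is then automatic: two maximal elements would each lie in the other's outer wedge, which by the same maximum principle forces them to coincide. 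The principal obstacle is the comparability step, and within it the degenerate case where $X_1$ and $X_2$ touch tangentially at the seam rather than crossing transversally; there the $C^0$ Galloway-type maximum principle for the null hypersurfaces generated off the two extremal surfaces is needed to propagate local coincidence, and some care with the homology condition is required to rule out exotic configurations in which the seam is empty but the surfaces are still non-comparable.
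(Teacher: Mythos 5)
Your Zorn's-lemma / monotone-limit step for extracting a maximal element is essentially the paper's first move, but the load-bearing part of your argument --- the ``comparability lemma'' asserting that $\preceq$ is a \emph{total} order on $\mathcal{E}$ --- is a genuine gap, and the paper deliberately avoids it. Two extremal surfaces homologous to the same boundary can cross transversally; no maximum principle forbids this. The Andersson--Galloway maximum principle requires one surface (or null hypersurface) to lie locally to one side of the other and touch it tangentially; at a transversal seam neither surface is on one side of the other, so ``local coincidence propagating to global identity'' does not follow. Your barrier construction also has the signs backwards: $\partial\bigl(\mathrm{Out}_\Sigma[X_1]\cup\mathrm{Out}_\Sigma[X_2]\bigr)$ bounds $\mathrm{Int}_\Sigma[X_1]\cap\mathrm{Int}_\Sigma[X_2]$, whose corners are \emph{convex} outward, so smoothing drives $\theta_{(k)}\to+\infty$ at the seam, not to negative values. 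The envelope with the useful sign is the edge of $W_{X_1}\cap W_{X_2}$ (intersection of \emph{outer} wedges), and since $X_1,X_2$ generically lie on no common Cauchy slice, that edge must be taken in spacetime: it contains pieces of $X_1$, $X_2$, and of the null boundaries $\partial^{\pm}W_{X_i}$, whose expansions are controlled by focusing rather than by extremality (this is the content of the paper's Lemmas 1 and 2; your ``slide a tiny null distance onto a common slice'' substantially understates this step).

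Even with the correct envelope and signs, a barrier alone proves nothing: you would obtain a surface $\sigma$ with $\theta_{(k)}\leq 0$ and $\theta_{(\ell)}\geq 0$ bounding $W_{X_1}\cap W_{X_2}$, and the missing ingredient is an \emph{existence} mechanism producing a new extremal surface inside $W_\sigma$. That is exactly the paper's Lemma~\ref{lem-restrictedmaximin} (restricted maximin). With it, the paper's logic is: a maximal (``exposed'') element exists by Zorn; if it were not the greatest element, intersecting its outer wedge with that of an incomparable extremal surface and applying restricted maximin would yield an extremal surface strictly outside it, contradicting maximality. This establishes that a maximal element is automatically greatest \emph{without} ever proving comparability --- which is fortunate, since comparability is false in general. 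To repair your proposal you would need to abandon the totality claim and import both the spacetime intersection-of-wedges lemmas and the maximin existence result.
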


We will prove this proposition using a series of three lemmas:

\begin{lem}
Given two surfaces $\sigma_1$ and $\sigma_2$ homologous to $\mathscr{I}$, $W_{\sigma_1} \cap W_{\sigma_2}$ is a domain of dependence.
\end{lem}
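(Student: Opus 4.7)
The plan is to exhibit $W_{\sigma_1} \cap W_{\sigma_2}$ as the domain of dependence $D[A]$ of an explicit closed achronal set $A \subset M$. Intuitively, $A$ should be the ``outer envelope'' of $\sigma_1$ and $\sigma_2$: those portions of each surface lying in the closed outer wedge of the other, together with null bridging pieces along the horizons $\partial^{\pm} W_{\sigma_i}$ in any region where the two surfaces disagree about which is further out.

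I would first handle the special case in which $\sigma_1$ and $\sigma_2$ are mutually achronal, so that they lie in a common Cauchy slice $\Sigma$ of $M$. Then $\overline{\mathrm{Out}_\Sigma[\sigma_i]}$ is closed and achronal in $\Sigma$, with $W_{\sigma_i} = D[\overline{\mathrm{Out}_\Sigma[\sigma_i]}]$, and the natural candidate is
\[ A = \overline{\mathrm{Out}_\Sigma[\sigma_1]} \cap \overline{\mathrm{Out}_\Sigma[\sigma_2]} \subset \Sigma, \]
which is itself closed and achronal. The inclusion $D[A] \subseteq W_{\sigma_1} \cap W_{\sigma_2}$ is immediate since $A \subseteq \overline{\mathrm{Out}_\Sigma[\sigma_i]}$ for each $i$. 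For the reverse, any past-inextendible causal curve through a point $p \in W_{\sigma_1} \cap W_{\sigma_2}$ meets $\Sigma$ in exactly one point $q$; the domain of dependence property for each $W_{\sigma_i}$ forces $q \in \overline{\mathrm{Out}_\Sigma[\sigma_i]}$ for both $i$ simultaneously, hence $q \in A$. A symmetric argument with future-inextendible curves completes this case.

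For the general case, $\sigma_1$ and $\sigma_2$ need not be mutually achronal, and no single Cauchy slice of $M$ contains both. I would then take $A$ to be the outer envelope above: $(\sigma_1 \cap \overline{W_{\sigma_2}}) \cup (\sigma_2 \cap \overline{W_{\sigma_1}})$, together with appropriate cross-sections of the horizons $\partial^{\pm} W_{\sigma_i}$ inserted to patch the two spacelike pieces into a single achronal hypersurface. The null bridging pieces are to be chosen precisely so as to cut off any incipient timelike relation between a point of $\sigma_1$ and a point of $\sigma_2$. The equality $D[A] = W_{\sigma_1} \cap W_{\sigma_2}$ then follows from a case analysis of past- and future-inextendible causal curves, using the causal convexity of each $W_{\sigma_i}$ in $M$ inherited from its characterization as the domain of dependence of a closed achronal set.

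The main obstacle is the general case: making the outer envelope construction precise. One needs to specify the null bridging pieces and verify that $A$ is both closed and globally achronal in $M$, not merely piecewise so, handling in particular the ``corners'' where spacelike pieces of $\sigma_1$ or $\sigma_2$ meet the null horizon pieces. Once $A$ has these properties, the identification $D[A] = W_{\sigma_1} \cap W_{\sigma_2}$ reduces to a routine causal-curve analysis modeled on the mutually achronal case.
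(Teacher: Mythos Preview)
Your mutually achronal special case is fine and is essentially the degenerate instance of the paper's argument. The general case, however, has a real gap beyond the missing details you flag. The set you propose for $A$, namely $(\sigma_1 \cap \overline{W_{\sigma_2}}) \cup (\sigma_2 \cap \overline{W_{\sigma_1}})$ together with ``cross-sections of the horizons,'' is built entirely out of codimension-two pieces; these cannot be patched into a codimension-one partial Cauchy slice $A$ with $D[A] = W_{\sigma_1}\cap W_{\sigma_2}$. What you have actually described is the \emph{edge} of the intersection wedge (this is the content of the paper's subsequent Lemma~2), not a hypersurface whose domain of dependence is that wedge. Knowing the edge does not by itself establish that the region is a domain of dependence; you still need to fill in a codimension-one achronal slice bounded by that edge, and that is exactly the step you have deferred.

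The paper's route is genuinely different and avoids the gluing problem. It fixes a Cauchy slice $\Sigma \supset \sigma_1$ and then deforms it into a new Cauchy slice of the whole spacetime,
\[
\Sigma' \;=\; \partial\!\left( \bigl(J^+[\Sigma]\cup J^+[\sigma_2]\bigr)\cap \overline{I^-[\sigma_2]} \right),
\]
which is adapted to $\sigma_2$ as well as $\sigma_1$. Because $\Sigma'$ is a bona fide Cauchy slice of $M$, the set $H = \Sigma' \cap W_{\sigma_1}\cap W_{\sigma_2}$ is automatically closed and achronal, with no corners or null patches to check. Showing $D[H] = W_{\sigma_1}\cap W_{\sigma_2}$ then reduces to a short causal-curve chase. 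The payoff of your envelope idea is that it makes the edge structure transparent, but that is logically downstream of the present lemma; the paper's deformed-slice construction is what actually closes the argument.
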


\begin{proof}

Let $\Sigma$ be a Cauchy slice containing $\sigma_1$. We define a new surface $\Sigma' = \partial ( (J^+[\Sigma] \cup J^+[\sigma_2]) \cap \overline{I^{-}[\sigma_2]})$, where $\overline{I^-[\sigma_2]} \equiv M - I^-[\sigma_2]$. Since every inextensible timelike curve at some point in its past is outside $J^+[\Sigma] \cup J^+[\sigma_2]$, but eventually ends up in $J^+[\Sigma] \cap \overline{I^{-}[\sigma_2]}$, and since no timelike curve can exit $(J^+[\Sigma] \cup J^+[\sigma_2]) \cap \overline{I^{-}[\sigma_2]})$ after entering it, $\Sigma'$ is Cauchy. Note that, despite appearances, the definition of $\Sigma'$ is invariant under time-reversal symmetry. 
 
We will now show that $W_{\sigma_1} \cap W_{\sigma_2}$ is the domain of dependence of $H = \Sigma' \cap W_{\sigma_1} \cap W_{\sigma_2}$. Any causal curve intersecting $W_{\sigma_1} \cap W_{\sigma_2}$ needs to intersect $\Sigma \cap W_{\sigma_1}$ either (i) in $W_{\sigma_2} - \partial W_{\sigma_2}$ or (ii) outside of it . In case (ii), the causal curve needs to leave $J^{-}[\sigma_2]$ in $W_{\sigma_1}$ after intersecting $\Sigma$ or enter $J^+[\sigma_1]$ in $W_{\sigma_1}$ before intersecting $\Sigma$. Therefore, in both cases i and ii we conclude that the causal curve intersects $H$.

\end{proof}

Let $V$ now be a domain of dependence that intersects $\mathscr{I}$. Then there must exist an ``edge'' surface $\sigma$ homologous to $\mathscr{I}$ such that $V = W_\sigma$. More precisely, $\sigma$ can be defined as the set of points $p \in \partial V$ such that in any small neighborhood of $p$ any inextensible timelike curve crossing $p$ only intersects $V$ at $p$.

\begin{lem}
Let $V = W_{\sigma_1} \cap W_{\sigma_2}$ and let $\sigma$ be the edge of $V$ as defined above. Then, $\sigma \subset \sigma_1 \cup \sigma_2 \cup (\partial^+W_{\sigma_1} \cap \partial^- W_{\sigma_2}) \cup (\partial^-W_{\sigma_1} \cap \partial^+W_{\sigma_2})$.
\end{lem}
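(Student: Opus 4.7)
The plan is to take an arbitrary point $p \in \sigma$ and locate it in the decomposition $\partial W_{\sigma_i} = \sigma_i \cup \partial^+ W_{\sigma_i} \cup \partial^- W_{\sigma_i}$ for $i=1,2$, using the defining property of the edge: every inextensible timelike curve through $p$ intersects $V$ only at $p$ in some neighborhood. Since $\sigma \subset \partial V \subseteq \partial W_{\sigma_1} \cup \partial W_{\sigma_2}$ and $p \in V$ (recall domains of dependence include their boundary), $p$ lies on the boundary of at least one of the two outer wedges.

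First I would handle the trivial case $p \in \sigma_1 \cup \sigma_2$, which is immediately in the target set. So assume $p \notin \sigma_1 \cup \sigma_2$. I would then argue that $p$ must lie in $\partial W_{\sigma_i} \setminus \sigma_i$ for \emph{both} $i=1,2$. Suppose instead that $p$ were interior to $W_{\sigma_2}$; then $p \in \partial^\epsilon W_{\sigma_1}$ for some $\epsilon \in \{+,-\}$, and (taking $\epsilon=+$ for concreteness) a past-directed timelike curve from $p$ would immediately enter the interior of $W_{\sigma_1}$ while remaining in the interior of $W_{\sigma_2}$, thus staying inside $V$ and violating the edge condition. Hence $p \in \partial^{\epsilon_1} W_{\sigma_1} \cap \partial^{\epsilon_2} W_{\sigma_2}$ for some choice of signs.

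Finally I would eliminate the two ``parallel'' sign choices $(\epsilon_1,\epsilon_2)=(+,+)$ and $(-,-)$. If $p$ lay on both future horizons, then a past-directed timelike curve through $p$ would enter the interiors of both $W_{\sigma_1}$ and $W_{\sigma_2}$ simultaneously, again remaining inside $V$ on a nontrivial interval and contradicting the edge property; the $(-,-)$ case is identical after time reversal. This leaves exactly the two ``opposite-type'' possibilities $\partial^+ W_{\sigma_1} \cap \partial^- W_{\sigma_2}$ and $\partial^- W_{\sigma_1} \cap \partial^+ W_{\sigma_2}$, yielding the claimed inclusion.

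The step I expect to be the most delicate is justifying that a past-directed (respectively future-directed) timelike curve from $p \in \partial^+ W_{\sigma_i} \setminus \sigma_i$ immediately enters the \emph{interior} of $W_{\sigma_i}$. This is morally obvious---$\partial^+ W_{\sigma_i}$ is the future causal horizon generated by null geodesics shot from $\sigma_i$ along $k^a$, and away from $\sigma_i$ itself the chronological past of a point on such a generator lies in $W_{\sigma_i}$---but one must be careful about possible non-smoothness of $\sigma_i$, caustics and self-intersections on the horizon, and the closed/open conventions for $W_{\sigma_i}$ versus $\partial^{\pm} W_{\sigma_i}$. Once this local ``interior-entering'' statement is established uniformly in a neighborhood of $p$, the case analysis above closes the proof.
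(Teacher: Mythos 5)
Your proof is correct and follows essentially the same route as the paper's: decompose $\partial W_{\sigma_i}$ into $\sigma_i$ and the two Cauchy-horizon pieces, and use the fact that a timelike curve through a point of $\partial^{\pm}W_{\sigma_i}\setminus\sigma_i$ immediately enters $W_{\sigma_i}$ on one side to contradict the edge condition unless the signs are opposite. The paper organizes this as a single implication ($p\in\partial^+W_{\sigma_1}-\sigma_1$ forces $I^-(p)$ to miss $W_{\sigma_2}$ locally, hence $p\in\partial^-W_{\sigma_2}$) rather than your two-stage elimination, but the content is the same, including the delicate ``interior-entering'' point you flag.
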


\begin{proof}
Clearly, the edge is contained in $\partial V \subset \partial W_{\sigma_1} \cup \partial W_{\sigma_2}$. Say some point $p$ in the edge of $V$ were in $\partial^+W_{\sigma_1}-\sigma_1$. Then, every timelike curve crossing $p$ leaves $W_{\sigma_1}$ (and therefore $V$) in $I^+(p)$ and enters $W_{\sigma_1}$ in $I^-(p)$. Furthermore, for $p$ to be in the edge of $V$ it needs to be in $W_{\sigma_2}$, but $I^{-}(p)$ must not intersect $W_{\sigma_2}$. Therefore, $p$ needs to be in $\partial^-W_{\sigma_2}$. Together with the time reverse of this argument and also switching $\sigma_1$ and $\sigma_2$, we conclude that the edge of $V$ must be contained in $\sigma_1 \cup \sigma_2 \cup (\partial^+W_{\sigma_1} \cap \partial^-W_{\sigma_2}) \cup (\partial^-W_{\sigma_1} \cap \partial^+W_{\sigma_2})$.
\end{proof}

Lastly, we will state a lemma from Sec. 2.2 of \cite{BouSha21}, without providing the proof (see also Appendix B of \cite{BroGha19} for a similar discussion). The lemma assumes the existence of a stable maximin surface~\cite{Wal12} in any domain of dependence. 

\begin{lem}\label{lem-restrictedmaximin} If $\sigma$ is a surface homologous to $\mathscr{I}$ satisfying $\theta_{(k)} \leq0$ and $\theta_{(\ell)} \geq 0$, then there exists an extremal surface $Y \subset W_\sigma$ homologous to $\mathscr{I}$.
\end{lem}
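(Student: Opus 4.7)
\textbf{Proof plan for Lemma~\ref{lem-restrictedmaximin}.} The plan is to adapt Wall's maximin construction~\cite{Wal12}, restricted to the outer wedge $W_{\sigma}$, using the expansion conditions on $\sigma$ to guarantee that $\sigma$ acts as an extremal surface barrier in the sense of~\cite{EngWal14}. The argument has three main steps.

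First, I would establish that $\sigma$ is a barrier. Fire the future-directed null congruence generated by $\ell^{a}$ off of $\sigma$ into $W_{\sigma}$. By hypothesis, $\theta_{(\ell)} \geq 0$ on $\sigma$, but by the Raychaudhuri equation together with the NEC the expansion cannot subsequently increase along the congruence; the same argument applied to $k^{a}$ on the past side, using $\theta_{(k)} \leq 0$, shows that $\partial^{+}W_{\sigma}$ focuses. Combining these with the cross-sectional area formula, one obtains that any smooth deformation of $\sigma$ into the complement of $W_{\sigma}$ through codimension-two surfaces cannot have everywhere non-positive expansion on both null normals — that is, no extremal surface can be smoothly deformed across $\sigma$ from the outside.

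Second, I would set up the restricted maximin. Consider the family $\mathcal{F}$ of stable Cauchy slices $\Sigma$ of the spacetime that contain $\sigma$ and whose portion $\Sigma \cap W_{\sigma}$ is a spacelike hypersurface bounded by $\sigma$ on one side and $\mathscr{I}$-style data on the other. For each $\Sigma \in \mathcal{F}$, let $Y(\Sigma)$ be a minimal-area surface in $\Sigma \cap \overline{W_{\sigma}}$ that is homologous to $\mathscr{I}$ (within this region). Then take the supremum of $\mathrm{Area}[Y(\Sigma)]$ over $\mathcal{F}$ and pick a maximizing slice $\Sigma^{*}$ with associated surface $Y$, assuming the standard existence results for stable maximin surfaces.

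Third, I would show $Y$ is extremal. The key dichotomy is whether $Y$ lies entirely in the interior of $W_{\sigma}$ or has some overlap with $\sigma$. In the former case the usual Wall argument applies verbatim: first variation within $\Sigma^{*}$ vanishes by minimality (so the mean curvature in $\Sigma^{*}$ is zero), first variation normal to $\Sigma^{*}$ vanishes by maximality over slices (so the second fundamental form in the time direction has vanishing trace), and together these imply $\theta_{(k)}=\theta_{(\ell)}=0$. In the latter case, I would use the barrier property to push the offending piece of $Y$ off $\sigma$ into the interior of $W_{\sigma}$: the conditions $\theta_{(k)} \leq 0$ and $\theta_{(\ell)} \geq 0$ on $\sigma$ imply that an appropriately chosen infinitesimal deformation along the $\ell^{a}$ direction does not increase area to first order, while focusing via NEC makes the deformation strictly area-decreasing at higher order, contradicting minimality. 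Thus the minimum $Y(\Sigma)$ on every slice actually lies in the open interior of $W_{\sigma}$, and the maximin argument produces a genuine extremal surface $Y \subset W_{\sigma}$ homologous to $\mathscr{I}$.

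The main obstacle is the boundary case in which $Y(\Sigma)$ touches $\sigma$ on a nontrivial set, since maximin arguments typically only give one-sided variational inequalities when the minimizer sits on the constraint surface. Handling this cleanly requires a careful deformation argument combining the strict inequality $\theta_{(\ell)} > 0$ where available with the focusing provided by NEC in a neighborhood of $\sigma$; the degenerate case in which $\theta_{(k)} = \theta_{(\ell)} = 0$ everywhere on the overlap is exactly the case $Y = \sigma$ is itself extremal, which is the statement we want.
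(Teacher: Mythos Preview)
Your overall strategy---run a restricted maximin inside $W_\sigma$ and use the expansion signs on $\sigma$ to keep the minimizer off the boundary---is precisely the approach the paper adopts. Note, however, that the paper does \emph{not} actually prove this lemma: it states it ``without providing the proof,'' cites \cite{BouSha21}, and gives only a paragraph of intuition. So you are filling in more detail than the paper does, along the same lines.

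That said, two points in your sketch are genuinely off. First, you have the null directions backwards. In the paper's conventions $\ell^a$ is future-directed and points toward $\mathrm{Int}_\Sigma[\sigma]$, i.e.\ \emph{away} from $W_\sigma$; firing $\ell^a$ ``into $W_\sigma$'' and deforming $Y$ ``along the $\ell^a$ direction'' into the interior of $W_\sigma$ are both geometrically impossible. The boundary $\partial^+ W_\sigma$ is generated by $k^a$ and $\partial^- W_\sigma$ by $-\ell^a$; to push $Y$ off $\sigma$ into the open wedge you need a spacelike outward deformation (schematically along $k^a - \ell^a$, projected into the slice), whose first-order area variation is controlled by $\theta_{(k)} - \theta_{(\ell)} \leq 0$.

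Second, your boundary-case argument does not close. Raychaudhuri/NEC focusing bounds the second variation along \emph{null} congruences, not the spacelike in-slice deformation you need; and a non-positive second variation cannot contradict minimality when the first variation vanishes. The paper's intuition handles this differently: with strict inequalities $\theta_{(k)} < 0$, $\theta_{(\ell)} > 0$ the first variation is strictly negative and minimality fails immediately; the degenerate non-strict case is dealt with separately in \cite{BouSha21}, and there $Y$ is allowed to sit on $\partial W_\sigma$. A smaller structural difference: you maximize over full Cauchy slices containing $\sigma$, whereas the paper (following \cite{BouSha21}) maximizes over Cauchy slices of $W_\sigma$ whose inner edge may lie anywhere on $\partial W_\sigma$, and then uses the \emph{max} property to exclude $Y \cap (\partial W_\sigma \setminus \sigma)$. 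Your choice sidesteps that step but does not obviously cost you anything, since slice variations near an interior $Y$ are unconstrained by the requirement that the slice pass through $\sigma$.
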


Let us provide some intuition for Lemma \ref{lem-restrictedmaximin}. The restricted-maximin prescription returns a surface in $W_\sigma$ which is homologous to $\mathscr{I}$ and is minimal on some Cauchy slice of $W_\sigma$. If $\sigma$ satisfies $\theta_{(k)} <0$ and $\theta_{(\ell)} > 0$, then the maximin surface cannot intersect $\sigma$, since its area would get smaller under deformations away from such intersections. In \cite{BouSha21}, it was further shown that the max property of the maximin surface prohibits intersections with $\partial W_\sigma - \sigma$. This shows that the maximin surface is in the interior of $W_\sigma$ and thus extremal. Furthermore, it was argued that even when the inequalities are not strict, there still exists an extremal surface homologous to $\mathscr{I}$ in $W_\sigma$ even though in the surface might lie on $\partial W_\sigma$.

We are now ready to prove Proposition \ref{prop-X}:

\begin{proof}
In this proof any ``surface'' will mean a surface that is homologous to $\mathscr{I}$. Let an extremal surface $X$ be called \textit{exposed} if there does not exist any extremal surface $Y \neq X$ such that $Y \subseteq W_X$. Let us first argue that there must always exist at least one exposed surface. Define a partial ordering on extremal surfaces by declaring $X \geq Y$ if and only if $W_X \subseteq W_Y$. Note that exposed surfaces would correspond to maximal elements with respect to this partial order, while an outermost extremal surface would be a greatest element. Upper bounds exist for any chain because monotonicity and boundedness (there are no extremal surfaces near asymptotic infinity) ensure that any sequence $\{X_n\}$ of extremal surfaces with $W_{X_{n+1}} \subseteq W_{X_{n}}$ converges to an extremal surface $X_\infty$ with $W_{X_\infty} \subseteq W_{X_{n}}$ for any finite $n$. Hence, by Zorn's lemma, at least one maximal element, i.e. exposed surface, exists. 

Now suppose, by way of contradiction, that there exists an exposed surface $X_1$ that is not outermost, i.e. there exists some other extremal surface $X_2$ such that $W_{X_1} \nsubseteq W_{X_2}$. Let $V = W_{X_1} \cap W_{X_2}$ (by definition then $V \subset W_{X_1}$). By Lemma 2 and 3,  $V = W_{\sigma}$ for some surface $\sigma \subseteq X_1 \cup X_2 \cup (\partial^+W_{X_1}\cap \partial^-W_{X_2}) \cup (\partial^-W_{X_1}\cap \partial^+W_{X_2})$.
Let us consider each component of the set to which $\sigma$ belongs separately. The first two are subsets of extremal surfaces and hence have zero expansion. By focusing, the null hypersurface $\partial^+W_{X_1}$ and $\partial^+W_{X_2}$ are non-expanding towards the future, while the null hypersurface $\partial^-W_{X_1}$ and $\partial^-W_{X_2}$ are non-expanding towards the past. Therefore, $\sigma$ satisfies $\theta_{(k)} \leq 0$ and $\theta_{(\ell)} \geq 0$\footnote{This $\sigma$ may have kinks in which case the expansions are technically not well-defined. However, we expect that all of our results in this section generalizes easily to such surfaces using standard geometric techniques. Intuitively, by slightly smoothing out the kinks we can get a new infinitesimally nearby surface with very large expansions of the desired signs around the kinks.}, and hence by Lemma \ref{lem-restrictedmaximin}, there exists an extremal surface $Y$ contained in $W_\sigma = W_{X_1} \cap W_{X_2}$. The existence of this surface means that $X_1$ is not exposed, giving our desired contradiction.

\end{proof}

As an aside, note that extending the last part of this argument to quantum extremal surfaces, under assumption of the quantum focusing conjecture requires a small amount of extra work because the definition of quantum expansion is nonlocal. However strong subadditivity is enough to ensure that the quantum expansion of $\sigma$ satisfies the desired inequalities.

It is now easy to define the outermost extremal wedge:

\begin{defn} Let $X$ be the outermost extremal surface for a connected component of $\mathscr{I}$. The outermost extremal wedge $W_{X}$ is the outer wedge of $X$.
\end{defn}

Next we would like to define the simple wedge. Conceptually, the simple wedge is the largest bulk region that can be reconstructed from the near boundary state of the bulk fields using exclusively the bulk equations of motion. Consider some boundary state $\rho$ whose bulk dual $(M,g)$ we would like to reconstruct and evolve it to the far past and future with some Hamiltonian. In the classical regime, following Eq. \eqref{eq:HKLL}, HKLL prescribes the values of the bulk fields in $\mathscr{W}_C$ from the set of one-point functions of their corresponding local boundary operators on $\mathscr{I}$. In fact, the bulk equations of motions are sufficient to reconstruct the maximal Cauchy development of $\mathscr{W}_C$ -- which we can denote by $W_C$.

Cosmic censorship in general~\cite{Pen69, GerHor79} and causal wedge inclusion in particular~\cite{Wal12, HeaHub14, EngFol20} guarantees that the causal wedge contains no extremal surfaces in its interior: $W_{C}\subseteq W_{X}$. It is therefore impossible to reconstruct the region behind $X$ causally. HKLL alone, however, appears to prima facie fail at an earlier stage: the non-standard Cauchy evolution appears to stop short of recovering the gap region between $W_{C}$ and $W_{X}$, which is generically non-empty.

What if we evolve $\rho$ using a different Hamiltonian? Consider turning on a set of CFT operators at various times during the evolution. This would ``extract'' a new set of one-point functions from $\rho$ and therefore has the potential to expand the reconstructible region. In keeping with our philosophy of the simple wedge, we must restrict to sources that have a (semi-)classical bulk dual. Therefore, following \cite{EngWal17b, EngWal18}, we refer to boundary sources as simple if the bulk fields that they produce propagate causally into the bulk from the boundary insertion point -- and restrict to such sources henceforth. The change in time evolution when such simple sources are applied within some boundary time interval $[t_i, t_f]$ is given by the following time-ordered operator:

\begin{align}\label{eq-source}
E = \mathcal{T} exp \left[ -i \int_{t_i}^{t_f} dt'~J(t') \mathcal{O}_{J}(t')  \right],
\end{align}
where $J(t)$ is a simple source and $\mathcal{O}_J$ is its corresponding simple operator. Note that $\mathcal{O}_J$ might involve spatial integrals of local boundary operators $\mathcal{O}(t',x')$. An example of a simple operator is a spatial integral of a single-trace operator of the boundary gauge theory.

Adding $E$ to the evolution, say in a future-directed timefold, changes the spacetime from $M$ to some $M'$. By causality, $M - J^{+}[t_i]\footnote{Here $t_i$ stands for a timeslice of $\mathscr{I}$ geometrically.} \subset M'$. In particular, the perturbation to the spacetime is localized away from the past event horizon $\mathscr{H}^-$. However, sources like Eq. \eqref{eq-source} will typically change where the new future event horizon $\mathscr{H}^+$ intersects $\mathscr{H}^-$. In particular, suppose that we find simple sources that ``expand'' the causal wedge, i.e. place the new causal surface $C'$ in the future of $C$ on $\mathscr{H}^-$. Said in the CFT language, the new set of one-point data reconstructs a $\mathscr{W}'_{C}$ that contains $\mathscr{H}^-_C$. Furthermore, knowing the bulk equations of motion and the original Hamiltonian, we can reconstruct the Cauchy development of $\mathscr{H}^-_{C'}$, a wedge in the \emph{original} spacetime that contains $W_C$ as a proper subset.

It is natural to define the simple wedge according to the maximal success of this procedure:

\begin{defn}
The \textit{simple} wedge is the maximal bulk region that can be reconstructed from \textit{simple operators} acting on the dual CFT state, with the inclusion of simple sources and timefolds.
\end{defn}

Although we have defined the simple wedge in the context of classical field theory in the bulk, it is important to note that HKLL can reconstruct the quantum state of the bulk propagating in the causal wedge at each step. For bulk fields in the $1/N$ expansion, Eq. \eqref{eq:HKLL} provides the dictionary between local bulk operators and simple CFT operators, realizing this reconstruction in the Heisenberg picture.

Finally, we close this section by relating the causal and outermost extremal wedges. Intuitively, the causal surface should coincide with the extremal surface if and only if there is no focusing whatsoever on the horizons. However, because the extremal wedge is defined as a domain of dependence and the causal wedge is defined in terms of causal horizons, it does not immediately follow that the two must coincide in the absence of focusing. To reassure the reader, we prove the following lemma:

\begin{lem}
Let $X$ be the outermost extremal surface homologous to one or more connected components of $\mathscr{I}$. $W_{X}=\mathscr{W}_{C}$ if and only if $X$ is a bifurcation of stationary horizons (and thus $X=C$).
\end{lem}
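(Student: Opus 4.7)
I would prove the two directions separately, with the NEC-plus-extremality/Raychaudhuri argument doing most of the work.

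\textbf{Forward direction ($X$ bifurcation of stationary horizons $\Rightarrow$ $W_X=\mathscr{W}_C$).} Assume $X$ is the bifurcation surface of two non-expanding null hypersurfaces $N^{\pm}$ that extend all the way to $\mathscr{I}$. Because $\partial^+ W_X$ is by construction the null congruence fired from $X$ along $k^a$, terminated at caustics/self-intersections, and $N^+$ is precisely such a congruence with zero expansion and shear everywhere (so no caustics form and nothing terminates), we get $\partial^+ W_X = N^+$; and since $N^+$ reaches $\mathscr{I}$ and no timelike curve from $\mathscr{I}$ can cross it, $N^+ \subseteq \mathscr{H}^+$. Combined with the reverse inclusion (any point on $\mathscr{H}^+$ not in $\partial^+ W_X$ would lie in $W_X$ yet be null-separated from $\mathscr{I}$, contradicting $W_X$ being a domain of dependence reaching $\mathscr{I}$), we get $\partial^+W_X = \mathscr{H}^+_C$. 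The time-reversed argument gives $\partial^-W_X = \mathscr{H}^-_C$, so $X=C$ and $W_X = \mathscr{W}_C$.

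\textbf{Reverse direction ($W_X=\mathscr{W}_C$ $\Rightarrow$ $X$ bifurcation of stationary horizons).} First, $W_X = \mathscr{W}_C$ forces their edges to coincide, so $X=C$. Consequently, the two boundary pieces match: $\partial^+W_X = \mathscr{H}^+_C$ and $\partial^-W_X = \mathscr{H}^-_C$. Now consider the outgoing future null congruence generating $\partial^+W_X$. At $X$ itself, extremality gives $\theta_{(k)}=0$. Along the congruence, Raychaudhuri together with the NEC yields
\begin{equation*}
\frac{d\theta_{(k)}}{d\lambda} = -\tfrac{1}{d-2}\theta_{(k)}^2 - \sigma_{ab}\sigma^{ab} - 8\pi G\, T_{ab}k^a k^b \le 0,
\end{equation*}
so $\theta_{(k)} \le 0$ everywhere on $\partial^+W_X$. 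But this congruence is also $\mathscr{H}^+_C$, on which the classical area theorem forces $\theta_{(k)} \ge 0$. Thus $\theta_{(k)}\equiv 0$, and the Raychaudhuri equation then forces $\sigma_{ab}\equiv 0$ and $T_{ab}k^a k^b\equiv 0$ along the horizon: it is a non-expanding, shear-free null hypersurface. Applying the time-reverse of the same argument to $\partial^- W_X = \mathscr{H}^-_C$ gives the analogous statement for $\ell^a$. Hence $X$ is the bifurcation surface of two stationary horizons.

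\textbf{Main obstacle.} The delicate step is establishing equality of the boundary null hypersurfaces with the event horizons all the way out to $\mathscr{I}$, rather than merely at a neighborhood of $X=C$. In the forward direction this requires knowing that the stationary null congruence from $X$ really does reach $\mathscr{I}$ without encountering caustics or leaving the physical spacetime; in the reverse direction one must ensure that the congruence generating $\partial^+W_X$ has not already terminated at a caustic before coinciding with $\mathscr{H}^+$, a point that could be made rigorous by noting that the presence of a caustic on $\partial^+W_X$ would force a kink in $W_X$ inside $\mathscr{W}_C$, contradicting $W_X = \mathscr{W}_C$. Once this global identification of the null boundaries is in place, the local Raychaudhuri argument is routine.
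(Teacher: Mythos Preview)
Your proposal is correct and follows essentially the same two-direction strategy as the paper: Raychaudhuri/focusing plus the area theorem to get stationarity once $X=C$, and the absence of caustics on a stationary congruence to identify $\partial W_X$ with the event horizons. The one place the paper is more careful is precisely the step you pass over with ``$W_X=\mathscr{W}_C$ forces their edges to coincide, so $X=C$'': because $\mathscr{W}_C$ is defined as $J^+[\mathscr{I}]\cap J^-[\mathscr{I}]$ rather than as a domain of dependence, the paper spends a paragraph arguing that the set of points of $\partial\mathscr{W}_C$ spacelike-separated from all of $\mathrm{Int}[\mathscr{W}_C]$ is exactly $C$, matching the analogous characterization of $X$ in $\partial W_X$. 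This is exactly the issue you flag in your ``main obstacle'' paragraph, so you have correctly identified where the work lies.
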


\begin{proof}
If $\mathscr{W}_{C}=W_{X}$, then $\partial \mathscr{W}_{C}=\partial W_{X}$. If both wedges had been defined in terms of domains of dependence, it would immediately follow that $C=X$. However, since $\mathscr{W}_{C}$ is not defined as a domain of dependence, we have to work a little harder. The component of $\partial \mathscr{W}_{C}$ which is spacelike separated to every point in Int$[\mathscr{W}_{C}]$ is identical to the component of $W_{X}$ which is spacelike-separated to every point in Int$[W_{X}]$. By definition of the causal wedge, the former is $ C$. Since $W_{X}$ is generated by the domain of dependence of a hypersurface $H$ whose boundary in $M$ is $X$, $X$ is exactly the set of points that are spacelike separated from every point in Int$[W_{X}]$. This immediately shows that $C=X$. Because $\theta_{(n)}[X]=0$ for all $n^{a}$ in the normal bundle of $X$, we find that $\theta_{(\ell)}[C]=0=\theta_{(k)}[C]$. By the NCC, a future (past) causal horizon can only have vanishing expansion on a slice if it has vanishing expansion everywhere to the future (past) of that slice. So $\partial \mathscr{W}_{C}$ and subsequently $\partial W_{X}$  is generated by stationary horizons, and $X$ is a bifurcate stationary horizon. 

If $X$ is a bifurcate stationary horizon, then results of~\cite{BouEng15b, AkeBou17} immediately imply that $\partial W_{X}$ is the union of two truncated stationary horizons $\mathscr{H}^{\pm}$. Since $\mathscr{H}^{+}\cap \partial M = \partial D^{+}$, $\mathscr{H}^{+}$ is a past-directed null congruence fired from $i^{+}$. By the theorems of~\cite{BouEng15b}, $\partial J^{-}[\mathscr{I}]=\mathscr{H}^{+}$ up to geodesic intersections. Since $\mathscr{H}^{+}$ is stationary, it has no intersections, so $\partial J^{-}[\mathscr{I}]=\mathscr{H}^{+}$. Similarly $\partial J^{+}[\mathscr{I}]=\mathscr{H}^{-}$. Thus $\partial W_{X}=\partial\mathscr{W}_{C}$ and $W_{X}=\mathscr{W}_{C}$ by the homology constraint. 
\end{proof}
\noindent Note that this result remains valid for quantum extremal surfaces assuming the quantum focusing conjecture and a suitable generalization of AdS hyperbolicity.

\section{Two Dimensions}\label{sec:JT}
Let us illustrate our iterative procedure for removing matter falling across the past and future causal horizons in a simple toy model of JT gravity minimally coupled to a massless scalar field $\varphi$ (with no direct coupling between the dilaton $\Phi$ and $\varphi$). The absence of propagating degrees of freedom of the gravitational field as well as the factorization of the bulk matter into left- and right-movers are simplifications that naturally do not generalize to higher dimensions; nevertheless the procedure itself is well-illustrated in this setting, which we include for pedagogical reasons. The additional complications introduced in higher dimensions are resolved in subsequent sections.

Due to focusing resulting from the scalar field $\varphi$, the bifurcation surface will generically not be extremal, i.e.
\be
\partial_{n}\Phi|_{C_{1}}\neq 0
\ee
where $n^{a}$ is some vector normal to the causal surface $C_{1}$ (in particular, for null $n^{a}$ future-outwards directed, this would be positive; similarly for a time-reverse). As a consequence of the highly limited number of degrees of freedom in the problem, the extremality failure can only be a result of focusing: the future causal horizon will experience focusing due to the $\varphi$ left-moving modes and the past causal horizon will experience focusing due to the $\varphi$ right-moving modes. Our procedure instructs us to first remove the source of focusing of the future horizon by modifying the boundary conditions of $\varphi$, which we can easily do by implementing absorbing boundary conditions for the right movers in order to remove all the left-moving modes. This removes focusing from the future causal horizon, which pushes the horizon deeper into the bulk. As a consequence, the new causal surface (which is now marginally trapped) -- let us call it $C_{2}$ -- is pushed further along the past event horizon, which remains unmodified by this procedure. This first step is illustrated in the second panel of Fig.~\ref{fig:turnoffsources}. 

To turn off focusing of the past horizon, we evolve backwards in time, imposing boundary conditions that remove the right-movers. The past event horizon moves backwards, and the new causal surface $C_{3}$ (which is now marginally anti-trapped) is displaced from $C_{2}$ along the future event horizon of $C_{2}$. However $C_{3}$ is not necessarily extremal since shifting the past causal horizon reveals a part of the spacetime that was previously not included in the causal wedge: in particular, new left-moving modes can now appear in the causal wedge. This is illustrated in the third panel of Fig.~\ref{fig:turnoffsources}. This piecewise-null zigzag procedure thus shifts the causal surface deeper into the bulk; we may simply repeat the zigzag iteratively. 

In classical gravity, the focusing theorem and cosmic censorship (or strong asymptotic predictability) together guarantee that no extremal surface is ever in causal contact with $\mathscr{I}$: so the zigzag procedure can never modify an extremal surface nor move the causal surface deeper than any extremal surface. Thus the outermost extremal surface is an upper bound on the success of the procedure. Our goal, of course, is to show that this upper bound is in fact attained. 
\begin{figure}
    \centering
    \includegraphics[width=0.4\textwidth]{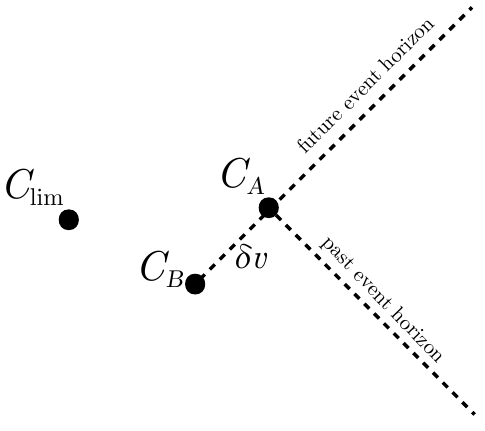}
    \caption{An illustration of the approach to the limit point $C_{\mathrm{lim}}$, where $C_{A}$ and $C_{B}$ are infinitesimally close to the limiting point.}
    \label{fig:limiting}
\end{figure}

Because the success of this procedure is bounded by the outermost extremal surface and because furthermore the procedure moves the surface monotonically inwards, a limiting causal surface $C_{\mathrm{lim}}$ exists, and the corresponding causal wedge does not intersect any extremal surface. We will now argue that $C_{\mathrm{lim}}$ is in fact extremal. Let $C_{A}$ be a causal surface obtained via iterative zigzags which is infinitesimally close to $C_{\mathrm{lim}}$; without loss of generality we may take $C_{A}$ to be in the marginally trapped portion of the zigzag (i.e. the left-movers had just been removed). Let $v$, $u$ be the affine parameters along the future and past event horizons, respectively, in the spacetime in which $C_{A}$ is the causal surface. By construction, $C_{A}$ has no expansion along the future event horizon:
\begin{equation}
 \frac{\partial}{\partial v}\Phi|_{C_{A}} =0,  
\end{equation}
and because by assumption it is not identical to $C_{\mathrm{lim}}$, 
\begin{equation}
 \frac{\partial}{\partial u}\Phi|_{C_{A}} <0.
\end{equation}
Let $C_{B}$ be the causal surface obtained by removing the right-movers from the spacetime where $C_{A}$ is the causal surface. By construction
\begin{equation}
 \frac{\partial}{\partial u}\Phi|_{C_{B}} =0.
\end{equation}
By the zigzag procedure, $C_{A}$ and $C_{B}$ are null-separated along the ``old'' future event horizon: i.e. along the null congruence that is the future event horizon in the spacetime in which $C_{A}$ is the causal surface. Let $\delta v$ be the amount of affine parameter separating $C_{A}$ and $C_{B}$. See Fig.~\ref{fig:limiting} for an illustration. Since the points $C_{A}$ and $C_{B}$ must be infinitesimally close to one another (since both infinitesimally near $C_{\mathrm{lim}}$), the spacetime metric in that neighborhood may be approximated as locally flat instead of AdS$_{2}$; using $u$ and $v$ as coordinates:
\be
ds^{2}=-2dudv.
\ee
In these coordinates, it is trivial to relate $\delta v$ to the change in $\partial_{u}\Phi$ along $v$. In particular, we may bound it from below:
\be
\delta v\geq \frac{\partial_{u} \Phi}{\partial_{v}\partial_{u}\Phi|_{\mathrm{max}}},
\ee
where $\partial_{v}\partial_{u}\Phi|_{\mathrm{max}}$ is the maximum value of $\partial_{v}\partial_{u}\Phi$ on the $\delta v$ interval. Similarly defining $\delta u$ as the null separation between $C_B$ and the next causal surface after again removing left movers, we obtain an analogous bound:
\be
\delta u\geq \frac{\partial_{v} \Phi}{\partial_{u}\partial_{v}\Phi|_{\mathrm{max}}}.
\ee
Under the assumption that $\partial_{u}\partial_{v}\Phi$ is bounded from above in this neighborhood (which we generically expect to be true), $\delta v$ and $\delta u$ approach zero no slower than $\partial_{u}\Phi$ and $\partial_{v}\Phi$ approach zero: thus $C_{\mathrm{lim}}$ must be extremal. Because focusing arguments ensure that the causal wedge is always contained in the outermost extremal wedge, it must be the outermost extremal surface.

\section{The Perturbation} \label{sec:perturbation}

Our task in higher dimensional gravity is now clear: we must find a perturbation that removes focusing from the causal horizons (without violating the null energy condition anywhere in the perturbed spacetime), thus shifting the causal wedge deeper in. What kind of perturbation $\delta g$ would move the causal surface towards rather than away from the appetizer? On a heuristic level, we are looking to open up the bulk lightcones so that more of the bulk is in causal contact with the asymptotic boundary. In searching for such a perturbation, we may build on the intuition of the boundary causality condition~\cite{EngFis16}, which states that the inequality
\be
\int\delta g_{ab}k^{a}k^{b}\geq 0
\ee
(where the integral is over a complete null geodesic with generator $k^{a}$) is equivalent to demanding that perturbations $\delta g$ of pure AdS source focusing (as opposed to defocusing). Here we are looking to do the opposite: we are looking to undo focusing, so it makes sense to look for a perturbation that satisfies an opposite inequality, with $\delta g_{kk}<0$ everywhere on $\mathscr{H}^+_C$. It is a priori not clear that it is possible to find a perturbation that simultaneously satisfies this inequality and also results in a spacetime that solves the Einstein equation. To prove this point, we must show that the requisite $\delta g$ solves the characteristic constraint equations on the event horizon.

In this section, we will prove that as long as the causal surface is not marginally outer trapped -- i.e. as long as $\theta_{(k)}[C]\neq 0$, it is possible to find exactly such a perturbation that (1) satisfies the characteristic constraint equations on the causal horizon and (2) shifts the causal surface deeper into the bulk. The procedure is roughly as follows: we prescribe a $\delta g$ deformation on $\mathscr{H}^+$; some elements of this $\delta g$ resemble the ``left stretch'' construction~\cite{BouCha19, BouCha20} involving a rescaling of the generators of certain achronal null hypersurfaces -- intuitively, this dilutes the infalling content on $\mathscr{H}^+$ and in turn reduces focusing. We then demonstrate that the gravitational constraints on $\delta g$ along with boundary conditions fix the requisite components of $\delta g$ in such a way that the perturbed spacetime has a larger causal wedge. In Sec. \ref{sec:ZigZag}, we will argue how repeating these perturbations pushes in the causal surface up to an apparent horizon in a given timefold, and to the appetizer using several timefolds.

We will call the generators of the future and past event horizons $k^{a}$ and $\ell^{a}$ respectively.~\footnote{These will agree with $k^{a}$ and $\ell^{a}$ on $C$ whenever $C$ is $C^{1}$.} We will extend $\ell^a$ to the entire spacetime by picking a smooth Cauchy foliation $\{{\cal C}_{\alpha}\}_{\alpha}$ of $\mathscr{I}$ and defining $\ell^{a}$ to be the bulk generators of $\partial I^{+}[{\cal C}_{\alpha}]$. This defines a null foliation of $J^{+}[\mathscr{I}]$ by past causal horizons; The past event horizon, which $C$ lies on, is a member of this foliation.

We adopt the coordinate and gauge choices of~\cite{EngWal18}: first, we introduce double null coordinates $(u,v)$ \footnote{Calling these double-null coordinates is a slight abuse of notation, as we will only require $g_{vv}=0$ at the horizon $u=0$.}, where
\begin{equation}  k^{a} = \left (\frac{\partial }{\partial v} \right)^{a}  \ \ ; \ \  \ell^{  a}= \left (\frac{\partial }{\partial u} \right)^{a}.
\end{equation}

\begin{figure}
    \centering
    \includegraphics[width=0.6\textwidth]{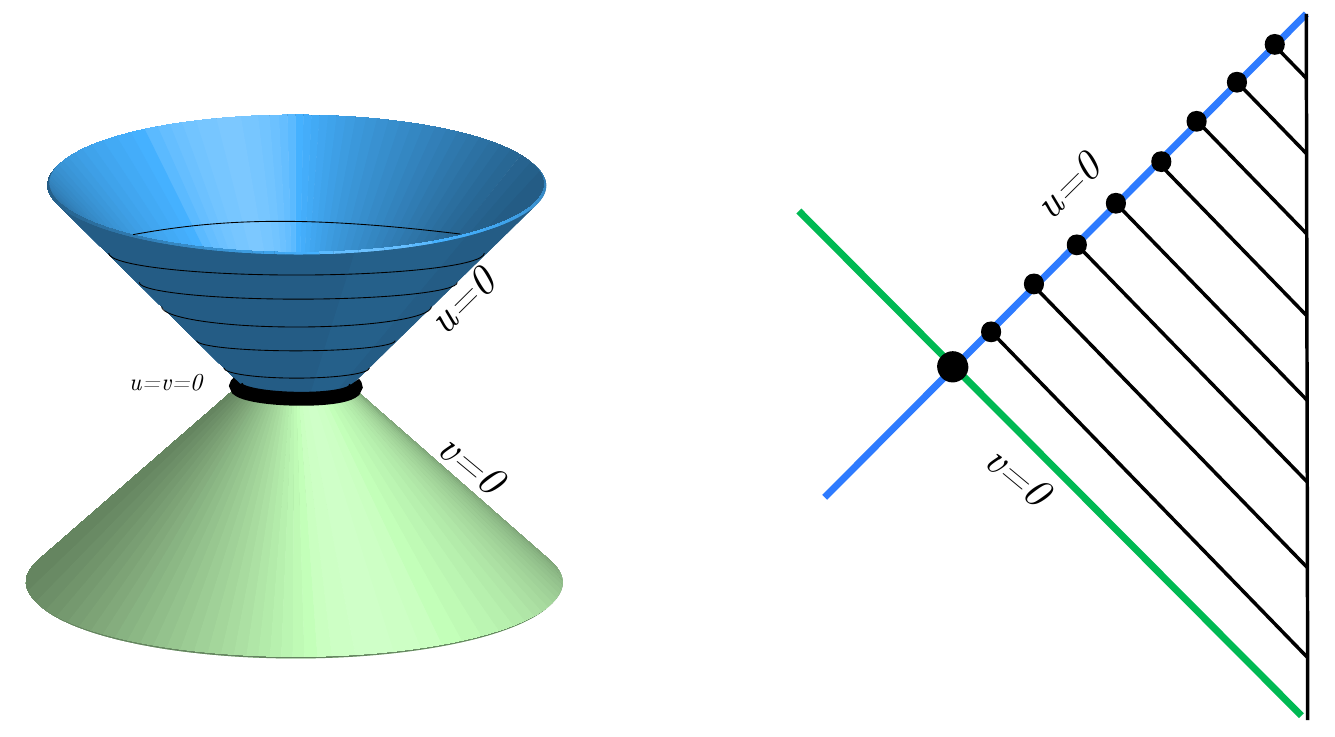}
    \caption{On the left: three-dimensional (left) illustration of the past ($v=0$) and future ($u=0$) event horizons, with a slicing of the future event horizon given by the intersection of $u=0$ with past causal horizons originating from complete slices of the $\mathscr{I}$. On the right: a conformal diagram illustrating the same.}
    \label{fig:AffineSlice}
\end{figure}

In these coordinates, the causal surface is at $u=v=0$, the future event horizon $\mathscr{H}^+$ is at $u=0$ and the past event horizon $\mathscr{H}^-$ is at $v=0$. See Fig.~\ref{fig:AffineSlice}. We can further fix the gauge in a neighborhood of $u=0$ so that the metric there takes the form:
\be
ds^2 = -2 du dv + g_{vv}dv^{2}+ 2 g_{vi} dv dy^i + g_{ij} dy^i dy^j
\ee
where $i,j$ denote the transverse direction on $\mathscr{H}^+$. At $u=0$ exactly, we of course require that $k^{a}=(\partial/\partial v)^{a}$ is null, so that:
\be
g_{vv}\rvert_{u=0}=0,
\ee
and we may further fix the gauge:
\be
g_{vi}\rvert_{u=0}=0,
\ee
but we cannot require these components to be \textit{identically} zero in a neighborhood of $u=0$, i.e. the derivatives may not vanish. For instance, the inaffinity of $k^a$ -- given by $\kappa_{(v)} = \frac{1}{2}\partial_u g_{vv}\rvert_{u=0}$ -- cannot be set to zero in general since we have independently fixed the $\ell^a$ vector field, and orthogonality to $\ell^a$ defines the constant (or affine) $v$ slices on $\mathscr{H}^+$.\footnote{The covariant definition of the inaffinity of $k^{a}$ is $k^a\nabla_a k^b=\kappa_{(k)} k^b$.}

The extrinsic curvature tensors of constant-$v$ slices are simple in this gauge
\bea
{B_{(v)}}_{ij} &= \frac{1}{2} \partial_v g_{ij}\rvert_{u=0},\\
{B_{(u)}}_{ij} &= \frac{1}{2} \partial_u g_{ij}\rvert_{u=0},\\
\chi_i &= \frac{1}{2} \partial_u g_{vi}\rvert_{u=0}.
\eea

\noindent where ${B_{(v)}}_{ij}$ and ${B_{(u)}}_{ij}$ denote the null extrinsic curvatures of constant $v$ slices and $\chi_i$ denotes the their twist. Since $\mathscr{H}^+$ is achronal, we can specify a new solution to the Einstein equation via a perturbative deformation of the metric on it, so long as the null constraint equations are satisfied. In particular, we consider the following perturbation on $\mathscr{H}^+_C (u=0, v\geq0)$:

\be \label{eq-perturb}
ds^2 = -2 du dv + (g_{vv}+\delta g_{vv}) dv^2 + 2(g_{vi}+\delta g_{vi}) dv dy^i + (g_{ij} + \delta g_{ij}) dy^i dy^j.
\ee
The perturbation components $\delta g_{vv}$, $\delta g_{vi}$, $\delta g_{ij}$ and their first $u$ derivatives are initial data that we can freely specify on the characteristic hypersurface $u=0$ so long as this data satisfies the null constraint equations\footnote{The null constraint equations are the equations of motion containing at most one $u$ derivative of the metric perturbations.}:
\bea
&\delta G_{vv} + \Lambda \delta g_{vv} = 8 \pi G \delta T_{vv} \label{eq-deltaRvv}\\
&\delta G_{uv} = 8 \pi G \delta T_{uv}\label{eq-deltaRuv}\\
&\delta G_{vi} +\Lambda \delta g_{vi} = 8 \pi G \delta T_{vi}\label{eq-deltaRvi}\\
&\hat{\delta G_{ij}} +\Lambda \hat{\delta g_{ij}} = 8 \pi G \hat{\delta T_{ij}}\label{eq-deltaRij}
\eea
where $\delta G_{a b}$ denotes the linearized perturbation of the Einstein tensor, $\Lambda$ is the cosmological constant, and the hat in the last equation denotes the traceless part. Note that the corresponding deformations of the stress energy tensor must be sourced by a perturbative modification to the matter fields that itself satisfies the fields' equations of motion on the background geometry. 

Thus we will need to prescribe $\delta g$ as well as a matter source for it. The latter is accomplished by a perturbative ``stretch''. A nonperturbative stretch is an exponential rescaling~\cite{BouCha19, BouCha20}:
\bea
&g'_{ij}(u=0,v,y)= g_{ij}(u=0, v e^{-s}, y)\label{eq-gijboost}\\
&T'_{vv}(u=0,v,y) = e^{-2s} T_{vv}(u=0, v e^{-s}, y)\label{eq-mattervvboost}\\
&\kappa'_{(v)}(u=0, v, y) = e^{-s}\kappa_{(v)}(u=0, v e^{-s}, y)\label{eq-kappaboost}
\eea
where prime denotes the transformed quantities. Our matter source will be obtained in the perturbative limit of this, by setting $e^{-s} = 1-\epsilon$, where $\epsilon\sim \mathcal{O}(\delta g)$ is the parameter controlling the expansion. Our choice for the full perturbation on $v \geq 0$, metric and matter, is then:
\bea
&\delta g_{vi}(u=0,v,y)=0 \label{eq-mainperturb1}\\
&\partial_u \delta g_{vv}(u,v,y)\rvert_{u=0} =2(1-\epsilon)\kappa_{(v)}(0, v(1-\epsilon), y) - 2\kappa_{(v)}(u=0, v, y)\label{eq-mainperturb2}\\
&\delta g_{ij}(u=0,v,y) = g_{ij}(u=0,v(1-\epsilon),y) - g_{ij}(u=0,v,y)\label{eq-gijepsilon}\\
&\delta T_{vv}(u=0,v,y) = (1-\epsilon)^2 T_{vv}(u=0,v(1-\epsilon),y) - T_{vv}(u=0,v(1-\epsilon),y)\label{eq-Tvvepsilon}
\eea
where in this linearized analysis we will only need to keep track of first order terms in $\epsilon$ and $\delta g$ (i.e. we will drop all terms of order $\epsilon \delta g$). Note that $\delta g_{vv}$, $\partial_u \delta g_{ij}$, and $\partial_u \delta g_{vi}$ are allowed to be non-zero. We will see that their values are constrained subject to the above restrictions.

Before we move forward with the analysis, we need to ask whether we can always obtain the stress tensor profile in Eq. \eqref{eq-Tvvepsilon}. This question is difficult to answer in broad generality. Therefore, from now on we restrict our matter sector to consist of a minimally coupled complex scalar field theory $\phi$ (with an arbitrary potential) coupled to some Maxwell field (or consider either separately), with Lagrangian density
\begin{align}
\mathcal{L}_{\text{matter}} = -\frac{1}{4} g^{ac} g^{bd} F_{ab} F_{cd} - g^{ab} \bar{\nabla}_a \phi \bar{\nabla}_b \phi^* - V(|\phi|^2)
\end{align}
where $F_{ab}$ is the field strength and $\bar{\nabla}$ denotes the covariant derivative with respect to the vector potential $A^a$. Then 
\be T_{vv} = 2\bar{\nabla}_v \phi \bar{\nabla}_v \phi^* + F_{iv} F^i_v,
\ee
Since both $\phi$ and $A_v$ are free initial data in the characteristic problem, we can simply generate the desired transformation by setting $\phi'(v,y) =\phi(v e^{-s},y)$ and $A'_v = e^{-s}A_v(v e^{-s},y)$, and $A_i' = A_i(v e^{-s},y)$.

We now proceed to prove that our choice of perturbation solves the null constraint equations with $\delta g_{vv}<0$ everywhere on $\mathscr{H}^+_C$. Because the unperturbed spacetime satisfies the NEC and no new matter terms are introduced by the perturbation, the perturbed spacetime will likewise satisfy the NEC. For pedagogical clarity, we will focus on the more illuminating $vv$-constraint here and relegate the remaining constraint equations to Appendix~\ref{sec:appendix}. Our analysis here will be twofold: we will first analyze the constraint \eqref{eq-deltaRvv} in the absence of $\delta g_{vv}$, separately compute the contribution of $\delta g_{vv}$, and sum the two together; this is possible so long as we work in the linearized regime.

By the Raychaudhuri equation, $R_{vv}$ depends only on the geometry of the $u=0$ hypersurface:
\be\label{eq-Rayray}
R_{vv} = -\partial_v \theta_{(v)} - {B_{(v)}}_{ij} {B_{(v)}}^{ij} - \kappa_{(v)} \theta_{(v)}.
\ee
Therefore in the absence of $\delta g_{vv}$ (i.e. implementing only the stretch):
\begin{align}
&R'_{vv}(u=0,v,y) = e^{-2s} R_{vv}(u=0,v e^{-s}, y)- (1-e^{-s})\theta_{(v)}(u=0,v=0,y) \delta(v) \nonumber\\
&= T'_{vv}(u=0,v,y)- (1-e^{-s})\theta_{(v)}(u=0,v=0,y) \delta(v)
\end{align}
where the delta function term results from the discontinuity in $\theta'_{(v)}$ across $v=0$. So $R'_{vv}-T'_{vv} = -(1-e^{-s})\theta_{(v)}[C] \delta(v)$. We now take the same perturbative limit of this transformation and re-introduce $\delta g_{vv}$:
\begin{align}\label{eq-vvinterm}
&-\frac{1}{2}\theta_{(u)} \partial_v \delta g_{vv} - \frac{1}{2} \nabla_{\perp}^2 \delta g_{vv}+ \chi^i \partial_i \delta g_{vv} +\nonumber\\
&\left(\nabla_{\perp}.\chi - \partial_v \theta_{(u)} - {B_{(v)}}_{ij} {B_{(u)}}^{ij} +8 \pi G (-T_{uv} -\mathcal{L}_{\text{matter}} + F_{uv}^2) \right)\delta g_{vv} \nonumber\\
&-\epsilon \theta_{(v)}[C] \delta (v) = 0
\end{align}
where all of the quantities multiplying $\delta g_{vv}$ and its derivatives are background quantities. We offer an alternative derivation of Eq. \eqref{eq-vvinterm} in Appendix \ref{sec:appendixvv} by implementing the ``stretch'' using an inaffinity shock~\cite{BouCha19, BouCha20}, which directly induces the delta function term in Eq. \eqref{eq-vvinterm}.

Since by construction, we are only perturbing the data on $\mathscr{H}^+_C$, $\delta g_{vv}(u=0, v=0^-, y)=0$, so the delta function term in Eq. \eqref{eq-vvinterm} enforces a jump in $\delta g_{vv}$:
\be\label{eq-gvvinit}
\delta g_{vv}(u=0,v=0^+,y) = 2\epsilon \frac{\theta_{(v)}[C]}{\theta_{(u)}[C]} \leq 0
\ee
where the sign comes from the fact that $\theta_{(v)}[C]\geq 0$ and $\theta_{(u)}[C]<0$.\footnote{In a non-generic spacetime, it is possible that $\theta_{(u)}[C]=0$, which raises potential concerns about the divergence in Eq. \eqref{eq-gvvinit}. We can avoid this issue by shifting the earliest location of the perturbation to some arbitrarily small $v=\delta>0$ instead of $v=0$. This new cut lies on a past causal horizon that reaches $\mathscr{I}$ and therefore cannot be stationary. In fact, originating the perturbation at $v=\delta>0$ makes sense physically since by sending $t_i$ to arbitrarily small values in the source Eq. \eqref{eq-source} we can affect the region arbitrarily close to $C$, but not $C$ itself. With this subtlety in mind, we pick $v=0$ as the origin of the perturbation in the main text because we can get arbitrarily close to $C$.} Note that by assumption $\theta_{(v)}[C]> 0$, and so $\delta g_vv < 0$, at least in a subset of $C$. This implies that the curve generated by $(\partial_v)^{a}$ is nowhere spacelike and at least timelike on a subset of $C$. In order to open up the lightcone and move the causal surface deeper into the bulk, it would be sufficient if $\delta g_{vv} \leq 0$ \textit{everywhere} on $\mathscr{H}^+_C$, not just at $C$.

We will now demonstrate that if $\delta g_{vv}(u=0,v=0^+,y)\leq 0$, then $\delta g_{vv}(u=0,v,y)\leq 0$ for all $v>0$, by analyzing the constraint that $\delta g_{vv}$ satisfies on $v>0$:
\begin{align}\label{eq-vvfinal}
-\frac{1}{2}\theta_{(u)} \partial_v \delta g_{vv} &- \frac{1}{2} \nabla_{\perp}^2 \delta g_{vv}+ \chi^i \partial_i \delta g_{vv}+ \nonumber \\
&\left(\nabla_{\perp}.\chi - \partial_v \theta_{(u)} - {B_{(v)}}_{ij} {B_{(u)}}^{ij} +8 \pi G (-T_{uv} -\mathcal{L}_{\text{matter}} + F_{uv}^2) \right) \delta g_{vv} = 0
\end{align}
which we may view as an ``evolution'' equation for $\delta g_{vv}$ on $u=0$ from which we can derive $\delta g_{vv}$ on $\mathscr{H}^+_C$ from its value at $C$.

It is not too difficult to see why $\delta g_{vv}(u=0,v,y)\leq 0$ for all $v>0$ starting from $\delta g_{vv}(u=0,v=0^+,y)\leq 0$. Suppose $\delta g_{vv}>0$ at some value of $v$. Then, assuming that all quantities in \eqref{eq-vvfinal} are continuous, there must exist a ``last'' constant-$v$ slice $\sigma$ on which $\delta g_{vv} \leq 0$ everywhere. By continuity, there exists a point $p\in \sigma$ where $\delta g_{vv}|_{p}=0$ (and then immediately becomes positive for larger $v$). By construction, we must have $\partial_i \delta g_{vv}|_{p} = 0$ and $\nabla_{\perp}^2 \delta g_{vv}|_{p} \leq 0$. But by \eqref{eq-vvfinal} this implies that $\partial_{v} \delta g_{vv}|_{p} \leq 0$, and so $\delta g_{vv}$ cannot become positive.

This reasoning may seem a bit fast, but it can be made more rigorous (and free of simplifying assumptions) using standard techniques. The operator $\mathscr{L}$ acting on $\delta g_{vv}$ in \eqref{eq-vvfinal} is parabolic whenever $\theta_{(u)}<0$; it thus satisfies the weak comparison principle for parabolic operators, which states that if $f$ and $h$ are functions satisfying $\mathscr{L}f\leq 0$ and $\mathscr{L}h\geq 0$ everywhere in the interior of the parabolic domain, and $f\leq h$ on the boundary of the parabolic domain, then $f\leq h$ everywhere in the parabolic domain. Setting $f=\delta g_{vv}$ and $h\equiv 0$, we immediately find that $\mathscr{L}f=0$ and $\mathscr{L}h=0$, so the weak comparison principle yields the desired conclusion: $\delta g_{vv}\leq 0$ everywhere on $\mathscr{H}^+_C$. Technically, the weak comparison principle is usually stated for domains in $\mathbb{R}^{n}$, fortunately, it follows as a fairly direct consequence of the maximum principle for elliptic operators, which does hold for more general manifolds~\cite{AndGal98}. The functions $f$ and $h$ need only be of Sobolev type $W^{1,2}_{0}$; that is, only their local weak derivatives are required to exist~\cite{GilbargTrudinger}, which is sufficient for our purposes. In fact, a version of the maximum principle for elliptical operators on ``rough'' null hypersurfaces (including caustics and non-local intersections specifically on event horizons) was proved in~\cite{Gal99}.

To make sure that our $\delta g_{vv}$ solution exists, we need to also satisfy the other constraints \eqref{eq-deltaRuv}, \eqref{eq-deltaRvi}, and \eqref{eq-deltaRij}. This is easy to do because they are ``evolution'' equations for $\partial_u \delta g_{ij}$ and $\partial_u \delta g_{vi}$ on $\mathscr{H}^+_C$ which we can solve no matter what $\delta g_{vv}$ is. We relegate this discussion to appendix \ref{sec:appendix}.

Let us now discuss possible subtleties in our construction due to caustics. Since caustic lines will generically be a measure zero subset of $\mathscr{H}^+$~\cite{Rademacher1919, HawEll, Per04}, we believe that they do not pose a fundamental obstruction to our procedure. Caustic lines can intersect $C$, at which point $C$ will generically be kinked. At the location of the kink, a chunk of transverse directions, associated to the generators that emanate from the caustic line, needs to get inserted in the transverse domain on which we place our boundary data for Eq. \eqref{eq-vvfinal}. However, so long as this data satisfies $\delta g_{vv} \leq 0$, Eq. \eqref{eq-vvfinal} still guarantee $\delta g_{vv} \leq 0$ everywhere on $\mathscr{H}^+_C$. In fact, since these new generators do not extend to the past of the caustics by definition, we expect to have even more freedom in specifying this boundary data because we do not have to worry about how this boundary is glued to some past hypersurface.

Let us offer an alternative argument to further ameliorate caustic-related worries. As $\mathscr{H}^+_C$ settles down to Kerr-Newman, there exists an earliest cross section $\mu_{\text{earliest}}$ lying on a past horizon with no caustics in its future. By setting $\mu_{\text{earliest}}$ as the origin of our perturbation (the new $v=0$), we can avoid caustics altogether. Furthermore, each perturbation should make the portion of the horizon to the future of $\mu_{\text{earliest}}$ more stationary, pushing the new $\mu_{\text{earliest}}$ further to the past eventually approaching $\mathscr{H}^-$.

Lastly, it is important to show that the perturbation has not shifted the $u=0$ surface to the point where it is no longer close to the event horizon -- especially in the asymptotic region $v \to \infty$. This is simplest to do if we assume that the \emph{background} horizon settles down to a stationary spherically symmetric configuration at some finite affine parameter, though we expect proper falloffs to hold more generally. The equation for $\delta g_{vv}$ then simplifies to:
\begin{align}\label{eq-deltagvvspherical}
-\frac{1}{2}\theta_{(u)} \partial_v \delta g_{vv} &- \frac{1}{2} \nabla_{\perp}^2 \delta g_{vv}-\partial_v \theta_{(u)}~\delta g_{vv} = 0
\end{align}
On $\mathscr{H}^+$, we have $\theta_{(u)}\sim v$ asymptotically. We can then solve for the asymptotic behavior of $\delta g_{vv}$ from Eq. \eqref{eq-deltagvvspherical}:
\be
\delta g_{vv} \sim v^{-2}
\ee


Therefore $u=0$ asymptotes to a stationary null hypersurface after the perturbation, so it naturally lines up with the new causal horizon in the $v\to+\infty$ limit. We find that our proposed perturbation results in a spacetime that solves the Einstein equation and in which the causal horizon is pushed deeper into the bulk unless $\theta_{(k)}[C]=0$.

\section{Zigzag} \label{sec:ZigZag}

We will now use the above perturbation to show that the causal surface $C$ can be moved arbitrarily close to the outermost extremal surface $X$, the appetizer of the lunch, using simple sources only. This requires us to show that (1) the perturbation analyzed in the previous section can be engineered from simple sources on the boundary, (2) the perturbation can be iteratively repeated both in the past and future, resulting in the approach of $C$ to $X$, without incurring high complexity, and (3) that this procedure does not change the geometry of the lunch, nor does the causal surface breach the lunch region in the process. We will begin our discussion by assuming for simplicity\footnote{No pun intended.} that the causal surface and the appetizer have the same topology; topological differences between the two surfaces are discussed at the end.

To show (1), we simply employ our assumption of the validity of HKLL, discussed in Sec.~\ref{sec:intro}, and evolve the data on $\mathscr{H}^+_C \cup \mathscr{H}^-_C$ ``sideways'' towards $\mathscr{I}$ to find appropriate boundary conditions at $\mathscr{I}$, which will be smeared local sources.\footnote{We are aware that some mathematician somewhere must be apoplectic with rage after reading this paragraph. Since we are physicists, we will nevertheless persevere under the usual assumption in holography that HKLL does in fact work out. Concerned readers who do not a priori wish to subscribe to HKLL may take comfort in the following interpretation of our results: we show that reconstruction of the simple wedge is no more complex than that of the causal wedge.} This sideways evolution was also used in~\cite{EngWal18} to prove the simple entropy conjecture in the case where the horizon was only perturbatively non-stationary.

We will shortly demonstrate (2) in detail. The process is similar in spirit to the zigzag process in Sec. \ref{sec:JT}, but instead of removing infalling chiral modes, we apply our perturbation in Sec. \ref{sec:perturbation}. First, we will discuss the consequences of repeated iterations of our perturbation on the future horizon and then add timefolds into our procedure.

Let us begin by a comparison between the perturbed and unperturbed causal surfaces, denoted $C'$ and $C$ respectively, as a result of one instance of our perturbation on $\mathscr{H}^+_C$. Since the perturbation is localized away from the past event horizon $\mathscr{H}^-$, it is expedient to compare the relative location of $C$ with that of $C'$ using their position on the past event horizon: both are slices of $\mathscr{H}^-$. Note that in the perturbed geometry no special role is played by $C$. As shown in Sec.~\ref{sec:perturbation}, the perturbation guarantees that $C'$ is ``inwards'' (i.e. at larger $u$) compared with $C$ so long as $\theta_{(k)}[C]\neq 0$. Nothing stops us from then repeating the perturbation above on the \emph{new} future horizon. As long as some point on the causal surface satisfies $\theta_{(k)}>0$, the perturbation pushes the causal wedge further inwards.

The only obstruction in the construction above occurs if $\theta_{(k)}$ vanishes identically on the causal surface. Thus it is clear that the inwards shift of the causal surface obtained via simple sources limits to an outermost marginally outer trapped surface $\mu$. (We can define a rigorous notion of the causal surface approaching arbitrarily close to a surface with $\theta_{(k)}=0$ by picking an affine parameterization on $\mathscr{H}^-$ and defining proximity of the two surfaces in terms of the maximal elapsed affine parameter between them.) 

Let us provide intuition for the existence of $\mu$. Say on $\mathscr{H}^-$ we can identify two cuts $\mu_1$ and $\mu_2$ such that $\mu_2$ encloses $\mu_1$ and $\theta_{(k)}[\mu_1]\leq 0$, $\theta_{(k)}[\mu_2] \geq0$. Then we expect an outermost marginally outer trapped surface $\mu$ in-between $\mu_1$ and $\mu_2$.\footnote{This is proved when $\mu_1$ and $\mu_2$ satisfy $\theta_{(k)}[\mu_1]<0$, $\theta_{(k)}[\mu_2] >0$ and lie on smooth spacelike slices in $D=4$ in~\cite{AndMet07}, but since we can approach $\mathscr{H}^-$ with such spacelike slices we expect the same to hold for $\mathscr{H}^-$, and the technology used in the extant proof is expected to generalize to higher dimensions.} On $\mathscr{H}^-$, $C$ plays the role of $\mu_2$. For $\mu_1$, we can pick $\partial J^{+}[X] \cap \mathscr{H}^{-}$ -- whenever it is a full cross section of $\mathscr{H}^{-}$ -- which satisfies $\theta_{(k)}\leq 0$ by the focusing theorem. However, note that $\partial J^{+}[X] \cap \mathscr{H}^{-}$ might be empty if $\mathscr{H}^{-}$ falls into a singularity before intersecting $\partial J^{+}[X]$. Even so, at least for Kasner-like singularities, we can find cross sections of $\mathscr{H}^-$ in a neighborhood of the singularity which are trapped~\cite{Wal12}. Note also that generically, our choices for $\mu_1$ and $\mu_2$ satisfy $\theta_{(k)}[\mu_1]<0$, $\theta_{(k)}[\mu_2] >0$. We will then have candidates for both $\mu_1$ and $\mu_2$, so $\mu$ exists.

Prima facie the procedure at this point appears to have failed! The causal surface will generically stop well away from null separation with $X$, and even further away from coincidence with $X$. However, this is only true on the particular timefold under consideration. To proceed to close the gap further, we reverse the arrow of time. We can then repeat the procedure above in time reverse to shrink the discrepancy between the causal wedge and $W_X$. We iterate this procedure via forward and reverse timefolds; each step brings the causal surface and $X$ closer. Just like for the JT gravity case in Section \ref{sec:JT}, the causal surface should limit to the outermost extremal surface after sufficiently many timefolds. Importantly, since the bulk physics involved is entirely classical, the number of timefolds required for the causal surface to approach the outermost extremal surface, within a given precision, should be independent of the Planck scale. This means that the complexity of the process cannot diverge in the classical limit $N \to \infty$. We therefore conclude that for $X$ and $C$ of identical topology, the simple wedge and the outermost extremal wedge coincide.

Finally, we address (3): throughout this construction, the geometry of the lunch is left unaltered: the perturbation is localized in the causal complement of the lunch, guaranteeing that the lunch remains undisturbed.

What about the case where the topologies are different? This could for example be the case in a time-symmetric and spherically symmetric null shell collapse in AdS where $C$ is a sphere and $X=\varnothing$ \cite{EngWal17a}. If such topology difference exists, we would expect it to present itself between $C$ and the outermost marginally trapped surface on $\mathscr{H}^-$ in a timefold (or several timefolds) of our procedure above. Furthermore, because each iteration of our $\mathscr{H}^+_C$ perturbation pushes the causal surface inwards on $\mathscr{H}^-$ a bit, at some point the jump from $C$ to $C'$ would have to involve a topology change. As there is nothing explicit in our construction that constrains the topology of $C'$ according to that of $C$, we do not see a fundamental obstruction against such topology changes arising from our iterative procedure. A rigorous treatment of such cases might be interesting but is left to future work.

\section{Simplicity Killed the Python} \label{sec:appetizer}
Having established that the simple wedge is in fact reconstructible using exclusively simple experiments, we now explore the implications of our results beyond the converse to the Python's lunch: what is the dual to the area of the outermost extremal surface? What is the field theory interpretation of our results, and in particular what is the ``simple state'' dual to the simple wedge?

\subsection{Time-Folded Simple Entropy}
As noted in the introduction, the simple entropy of~\cite{EngWal17b, EngWal18} is a coarse-graining over high complexity measurements conducted after a fixed boundary time $t_{\mathrm{bdy}}$ on a single timefold:

\be \label{eq:SimpleEnt}
S_{\mathrm{simple}}[t_{\mathrm{bdy}},\rho_{\mathrm{bdy}}]= \max_{\rho\in{\cal B}} S_{\mathrm{vN}}[\rho]
\ee
where $\rho_{\mathrm{bdy}}$ is the actual state of the CFT, $t_{\mathrm{bdy}}$ is a choice of boundary time slice, and ${\cal B}$ is the set of all CFT states (density matrices) that have the same one-point functions as $\rho_{\mathrm{bdy}}$ under any simple sources turned on after the time $t_{\mathrm{bdy}}$ (and with some very late time cutoff to avoid recurrences). That is, ${\cal B}$ consists of the set of CFT states $\rho$ such that
\be
\left \langle E \mathcal{O}E ^{\dagger}\right \rangle_{\rho_{\mathrm{bdy}}} = \left \langle E \mathcal{O}E ^{\dagger}\right \rangle_{\rho},\ee
for all possible $E$ defined as in Eq.~\ref{eq-source}. The simple entropy at a given boundary time is thus a coarse-graining over high complexity data that preserves all of the simple data to the future (or past) of that time.

With these restrictions to a particular subset of boundary time and a fixed timefold, the simple entropy was proposed as a dual to the outer entropy, which is a bulk-defined quantity that coarse-grains over the exterior of an apparent horizon (a  surface which is by definition always marginally trapped\footnote{A refinement of an apparent horizon that was named a ``minimar surface''. Apparent horizons are generically minimar surfaces.}). The outer entropy coarse-grains over all possible spacetimes that look identical outside of a given apparent horizon to find the spacetime with the largest HRT surface, and thus the largest von Neumann entropy in the CFT:
\be
S_{\mathrm{outer}}[\mu]=\max\limits_{{\cal X}} \frac{\mathrm{Area}[X]}{4 G\hbar} =\max\limits_{\rho\in{\cal H}} S_{\mathrm{vN}}[\rho]=\frac{\mathrm{Area}[\mu]}{4 G\hbar},
\ee
where ${\mathcal X}$ consists of the HRT surfaces of all classical holographic spacetimes containing $O_{W}[\mu]$, and ${\cal H}$ is the corresponding set of CFT states; the final equality is proved in~\cite{EngWal17b, EngWal18}. This is done by discarding the spacetime behind $\mu$, constructing a spacetime with an HRT surface $X_{\mu}$ whose area is identical to the area of $\mu$, and then CPT conjugating the spacetime around $X_{\mu}$. By construction, $O_{W}[\mu]$ is left unaltered.

The proposal that the simple and outer entropies are identical says that there is a particular definition of black hole entropy which is a consequence of coarse-graining over the highly complex physics that we expect describes the interior:
\be
S_{\mathrm{outer}}[\mu(t_{\mathrm{bdy}})] = S_{\mathrm{simple}}[t_{\mathrm{bdy}}]
\ee
where $t_{\mathrm{bdy}}=\partial J^{-}[\mu]\cap \mathscr{I}$. Our construction in Section~\ref{sec:perturbation} establishes this conjecture for apparent horizons: in a given timefold, it is possible to push the event horizon all the way up to $\mu(t_{\mathrm{bdy}})$ without accessing any high complexity data for $t>t_{\mathrm{bdy}}$. 

Our construction is of course more general, as it applies to timefolds. Extending the simple entropy proposal to include timefolds immediately yields the holographic dual to the area of the outermost extremal surface $X$:
\be
\frac{\mathrm{Area}[X]}{4G\hbar}=S_{\mathrm{outer}}[X] = S_{\mathrm{simple}}
\ee
where $S_{\mathrm{simple}}$ is obtained from $S_{\mathrm{simple}}[t_{\mathrm{bdy}}]$ by taking $t_{\mathrm{bdy}}\rightarrow -\infty$ and including arbitrary timefolds. The inclusion of timefolds removes the need for a reference apparent horizon, and the coarse-grained spacetime (in which the outermost extremal surface is in fact the HRT surface) is obtained by CPT-conjugating around the outermost extremal surface $X$; see Fig.~\ref{fig:coarsegraining}. Crucially, note that the coarse-graining procedure leaves the outermost extremal wedge untouched and coarse-grains only over the lunch. Standard entanglement wedge reconstruction via quantum error correction~\cite{AlmDon14, DonHar16, CotHay17} applies to reconstruction of the outermost extremal wedge, since in the coarse-grained spacetime obtained by CPT conjugation, the outer wedge of $X$ is exactly the entanglement wedge. Since we will argue below that the coarse-grained spacetime has a simple modular Hamiltonian, entanglement wedge reconstruction using, for example, modular flow as in \cite{FauLew17} should be much simpler when based on the coarse-grained state rather than the original state. This is consistent with the simplicity of reconstructing the outermost extremal wedge.

\subsection{The Simple State}
So far we have introduced two manipulations that can be done to our original spacetime. The first was the zigzag procedure, introduced in Section \ref{sec:ZigZag} which makes the causal wedge coincide with (or become arbitrarily close to) the outermost extremal wedge. We say that the resulting spacetime is `exposed' because everything in the simple wedge can be directly seen by the boundary.\footnote{Note that this definition is unrelated to our definition of an `exposed surface' in the proof of Proposition \ref{prop-X}.} The second is the coarse-graining procedure introduced above where we CPT conjugate about the outermost extremal surface and thereby create a state where the outermost extremal wedge coincides with the entanglement wedge. If we apply \emph{both} manipulations to the spacetime, we can produce a spacetime where all three wedges (approximately) coincide. This is illustrated in Fig.~\ref{fig:diamond}.

\begin{figure}
    \centering
    \includegraphics[width=0.8\textwidth]{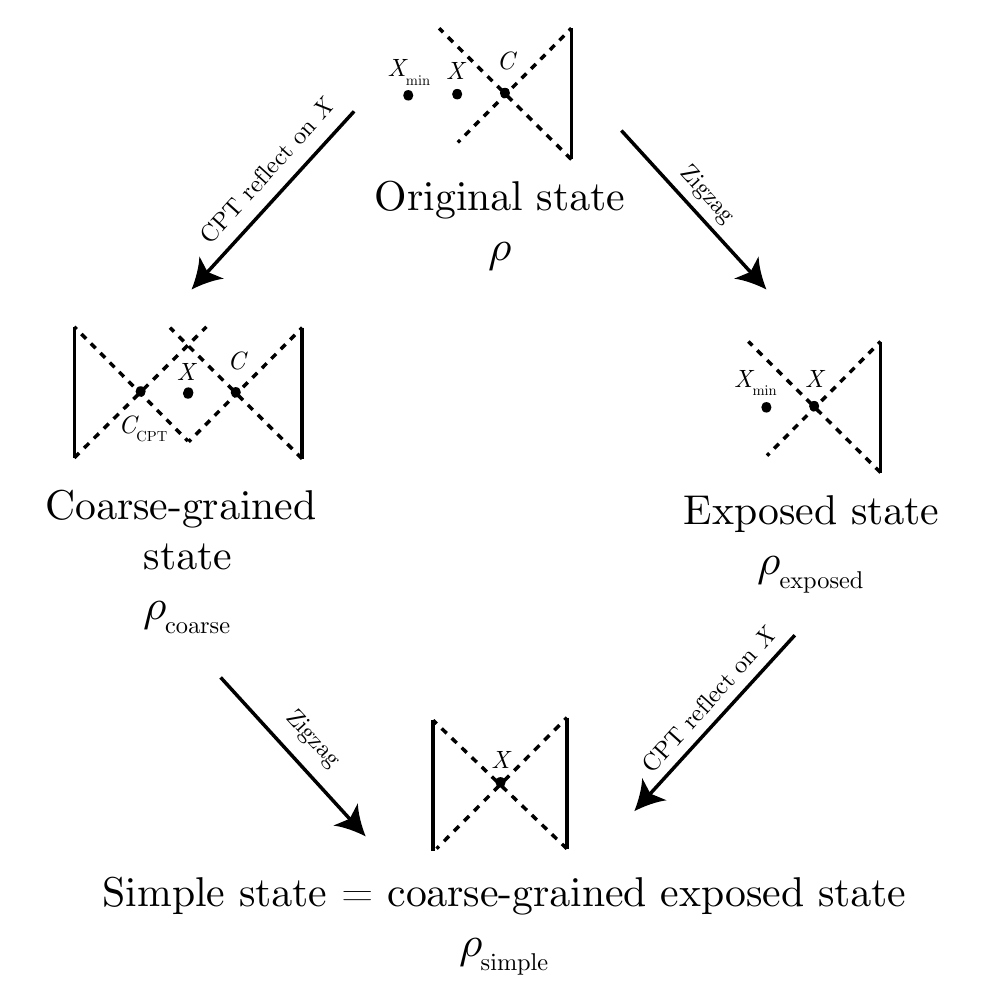}
    \caption{A diagram illustrating the relationships between the different states: the original CFT state $\rho$, which may be either coarse-grained to obtain $\rho_{\mathrm{coarse}}$ by forgetting about high complexity operators, or it may be ``exposed'' by acting on it with simple operators. The two operations modify causally independent and non-intersecting portions of the spacetime, so they commute: after obtaining the coarse-grained state in which $X$ is the HRT surface, we may perform our zigzag procedure to push the causal surface up to $X$ and obtain the simple state in which all three wedges coincide. Or, after obtaining the exposed state in which the causal and outermost extremal wedge coincide, but the entanglement wedge properly contains both, we may coarse-grain to obtain the same simple state.}
    \label{fig:diamond}
\end{figure}

That is, given any holographic CFT state $\rho$ dual to some entanglement wedge (which will likely have a Python's lunch), there exists a $\rho_{\mathrm{coarse}}$ which is indistinguishable from $\rho$ via simple experiments and has no Python's lunch. Executing our procedure zigzag on this coarse-grained state yields the coarse-grained exposed spacetime, in which the causal, simple, and entanglement wedges all coincide or come arbitrarily close to coinciding. The dual to this is described by the state obtained via the zigzag procedure together with the set of simple operators that remains from the final timefold. We shall refer to this history that includes the the remaining simple operators the simple history, and in a slight abuse of notation we will denote this entire history of states as $\rho_{\mathrm{simple}}$. 


What is the CFT interpretation of $\rho_{\mathrm{simple}}$? Below we prove that the causal and entanglement wedges coincide exactly if and only if the state dual to the entanglement wedge has a geometric modular flow. The immediate implication is that in the case where the zigzag procedure gives an exact coincidence, the simple state has an exactly local modular Hamiltonian, and rather than being a history it is in fact a single state. Note that this is suggestive of a CFT dual to a gravitational no-hair theorem. To be more precise: this result suggests that the set of stationary holographic black holes is to be identified with the high limited set of states with exactly local modular flow. If coincidence between the causal surface and the appetizer is asymptotic rather than exact, then we come to the conclusion that the modular flow generated by the simple state is very close to local in the sense that only operators with support in the asymptotically shrinking region between the causal and entanglement wedges are able to definitively tell that the two are not identical. Since that region translates (for simple operators) to access to arbitrarily late or early times, we find that finite-time simple measurements are unable to tell that the modular flow generated by $\rho_{\mathrm{simple}}$ at each stage is not local. 

The secondary implication is that it is possible to take any holographic state $\rho$ and, via a series of simple operations, render its modular flow (nearly) indistinguishable from a geometric flow via any simple experiments. If the appetizer has sufficient symmetry, then the statement should be true exactly. Let us now prove our theorem, which we do in broad generality for boundary subregions.

\begin{thm} Let $\mathscr{W}_{C}[R]$ denote the causal wedge of a boundary subregion $R$, and let $W_{E}[R]$ denote the entanglement wedge of the boundary subregion. $\mathscr{W}_{C}[R]=W_{E}[R]$ if and only if the boundary modular Hamiltonian on $R$ generates a geometric flow with respect to a boundary Killing vector field on $R$. 
\end{thm}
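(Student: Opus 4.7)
The plan is to reduce this theorem to the bifurcate-Killing-horizon lemma already proved in Section~\ref{sec:SimpleWedge}, and then identify modular flow with Killing flow via the Sewell-type correspondence plus JLMS.

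For the forward direction, suppose $\mathscr{W}_C[R]=W_E[R]$. The same argument that was used for full connected components of $\mathscr{I}$ should go through essentially verbatim for the subregion case: equality of the two wedges forces the HRT surface $X_R$ to coincide with the causal surface $C_R$, and both null components of $\partial W_E[R]$ to have vanishing expansion everywhere (extremality of $X_R$ on the bifurcation surface, followed by NCC plus the Raychaudhuri equation evolving that vanishing expansion outwards). Hence $\partial W_E[R]$ is a bifurcate stationary horizon. Standard rigidity results (Hawking, Hollands--Ishibashi--Wald) then produce a bulk Killing vector $\xi^a$ that is null on $\partial W_E[R]$, vanishes on $X_R$, and is timelike in the interior. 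On a bifurcate Killing horizon the semiclassical bulk state restricted to $W_E[R]$ must be KMS with respect to the $\xi^a$-flow at the Hawking temperature, so the bulk modular flow is geometric and generated by $\xi^a$. Applying JLMS and taking the asymptotic restriction of $\xi^a$ to $\partial D[R]$ then gives a boundary (conformal) Killing vector field generating the CFT modular flow.

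For the reverse direction, assume the boundary modular Hamiltonian on $R$ acts as a geometric flow along some boundary Killing vector $\xi^a_{\partial}$ on $D[R]$. By JLMS the leading-$1/N$ bulk modular flow on $W_E[R]$ must also be geometric. A geometric flow that preserves the bulk metric is generated by a Killing vector $\xi^a$ of $W_E[R]$, and since modular flow fixes the entangling surface pointwise, $\xi^a$ vanishes on $X_R$. This exhibits $X_R$ as the bifurcation surface of a bifurcate Killing horizon that bounds $W_E[R]$. Invoking the subregion analogue of the earlier lemma (the direction ``bifurcate stationary horizon implies $W_X = \mathscr{W}_C$,'' which used only focusing and the fact that stationary horizons have no caustics) closes the loop and gives $W_E[R]=\mathscr{W}_C[R]$.

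The main obstacle is making precise the identification of the bulk modular flow with the Killing flow: the clean Bisognano--Wichmann/Sewell statement is normally formulated for free algebraic QFT in a distinguished state, whereas here one wants it for the full interacting bulk QFT on a general background satisfying Einstein's equation. In the classical $G_N\to 0$ limit this is not a serious difficulty because the area operator dominates the modular Hamiltonian and the bulk state's invariance under $\xi^a$ is fixed by the Killing symmetry of the background; but carefully tracking the $1/N$ corrections and the precise sense in which ``geometric'' is meant on each side requires care. A subsidiary technical point is the extension of $\xi^a_{\partial}$ from $D[R]\subset \mathscr{I}$ to a genuine Killing vector on $W_E[R]$; this does not follow from general principles about asymptotic symmetries, but is forced here precisely because JLMS has already told us the bulk flow is geometric, so the extension is produced by the flow itself rather than being imposed by hand.
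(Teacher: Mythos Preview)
Your strategy is genuinely different from the paper's and is worth contrasting. The paper does not attempt to construct any bulk Killing vector. For the direction ``geometric boundary modular flow $\Rightarrow$ wedges coincide,'' it observes that if boundary modular flow maps local operators to local operators, then the Faulkner--Lewkowycz modular-flow reconstruction of $W_E[R]$ collapses to an HKLL-type smearing of local boundary operators over $D[R]$; any bulk operator localized in a putative gap between $\mathscr{W}_C[R]$ and $W_E[R]$ would then have to commute (at leading large~$N$) with every local boundary operator via the extrapolate dictionary, which is incompatible with that smearing formula. For the converse, the paper again appeals directly to the Faulkner--Lewkowycz analysis (their Appendix~B.1): the obstruction to locality of modular flow on $\partial W_E[R]$ is precisely the variation of the induced metric of the horizon cross-sections along the generators, which vanishes when $\partial W_E[R]$ is stationary. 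Neither direction invokes rigidity or Sewell.

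Your route via rigidity plus Sewell has a genuine gap. The theorems you cite (Hawking; Hollands--Ishibashi--Wald) run in the opposite direction: they \emph{assume} a stationary spacetime (a timelike Killing field, typically at infinity) and conclude that the event horizon is a Killing horizon. They do not manufacture a bulk Killing vector from a bifurcate non-expanding horizon alone. What the lemma in Section~\ref{sec:SimpleWedge} hands you is only $\theta=0$ (and hence, by Raychaudhuri with the NEC, $\sigma=0$ and $R_{kk}=0$) on both sheets; upgrading this to an honest Killing field in a full neighborhood of the horizon is a local rigidity statement of the Friedrich--R\'acz--Wald or Alexakis--Ionescu--Klainerman type, which is delicate, matter-model-dependent, and certainly not ``standard.'' Your reverse direction has a parallel issue that you flag but do not actually resolve: JLMS equates the actions of the two modular flows on the code subspace, but promoting the resulting automorphism of the bulk operator algebra to a bulk diffeomorphism (and hence to a Killing flow preserving the metric) is precisely the nontrivial content. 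The paper's argument sidesteps both difficulties by never leaving the operator-algebraic level.
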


\begin{proof}
Assume that the boundary modular Hamiltonian generates a geometric flow with respect to some Killing vector field $\xi^{I}$ on $\partial M$ (here $I$ is a boundary spacetime index). Under modular flow, a local operator is mapped to another local operator:
\be
\mathcal{O}(x,s)= \rho_{R}^{-is/2\pi}\mathcal{O}(x)\rho_{R}^{is/2\pi}=\mathcal{O}(x_{\xi}(s))
\ee
where $x_{\xi}(s)$ is the boost along $\xi^{I}$ of $x\in D[R]$. Via~\cite{FauLew17}, $W_{E}[R]$ can be reconstructed by smearing the modular flow of local operators over $D[R]$. Since in this case modular flow is an automorphism on the space of local operators in $D[R]$, we can reconstruct all operators in $W_{E}[R]$ by smearing local operators:
\be
\Phi(x) = \int\limits_{D[R]} f(X|x)\mathcal{O}(X)dx,
\ee
where $f(X|x)$ is a smearing function. If there is a gap between $W_{E}[R]$ and $\mathscr{W}_{C}[R]$, then operators that are localized to the gap should commute with all local operators on the boundary (within our code subspace) via the extrapolate dictionary (note that this only works in the large-$N$ limit where we don't have to worry about gravitational dressing). However this is inconsistent with the equation above; so $x\in \mathscr{W}_{C}[R]$. But this argument holds for all local operators: there exist no local operators in the gap between $W_{X}[R]$ and $\mathscr{W}_{C}[R]$. In the large-$N$ limit (without backreaction), this means that there simply is no gap between the two wedges: $W_{X}[R]=\mathscr{W}_{C}[R]$.

To prove the other direction, we consider the proof of~\cite{FauLew17} for the zero-mode formula of entanglement wedge reconstruction (appendix B.1 of~\cite{FauLew17}). Starting with equation B.71, it is shown that the nonlocality of modular flow on $\partial W_{E}[R]$ is due to the change in the instrinsic metric of spatial slices of $\partial W_{E}[R]$ (in particular, the loss of ultralocality). When $\partial W_{E}[R]$ is stationary, the metric on codimension-two slices does not change with evolution along the congruence. This  means that the modular flow on $\partial W_{E}[R]$ is local, which in turn implies that the boundary modular flow is local as well. 
\end{proof}

\section{Discussion}\label{sec:disc}

Our primary technical result in this article is the proof of the converse to the Python's lunch proposal in the strict large-$N$ limit: operators that lie outside of a Python's lunch are simply reconstructible in the dual CFT, and moreover this reconstruction only relies on the bulk dynamics in the large-$N$ limit, manifestly respecting the causal structure of the background metric.

We emphasize that bulk reconstructions that are causal in this sense cannot work for the interior of the Python's lunch because no causal horizon can intersect the lunch. The CFT encoding of the Python's lunch appears to involve highly non-local quantum gravity effects. An example of such non-local dynamics is the ER=EPR conjecture~\cite{Van13, MalSus13} which asserts that the entanglement between an evaporating black hole and its Hawking radiation after the Page time must allow complicated operations on the distant radiation to change the state behind the lunch, drastically violating the naive causal structure dictated by the background metric. It has been speculated that wormhole-like ``corrections'' to the background geometry connecting the radiation to the black hole interior could explain the ``true'' causal structure not captured by the background metric. It is natural to speculate that similar dynamics are at play in the Python's lunch encoding into the boundary.

We will now discuss various generalizations of our main results:

\subsection{Boundary-Anchored Surfaces} \label{sec:bdyanchored}
We have focused here primarily on compact surfaces, but we may pose similar questions for boundary subregions: given the state $\rho_{R}$ on a boundary subregion $R$, how complex is the reconstruction of operators behind the event horizon but within the outermost extremal wedge? This requires a treatment of surfaces with a boundary-anchored component rather than surfaces whose components are all compact. Most of our results generalize almost immediately to the boundary-anchored case: Lemma 3~\cite{BouSha21} makes no reference the topology of surfaces (beyond the homology constraint); similarly for the proofs of Lemmas 1 and 2. The perturbed initial data prescription at the causal surface also carries over mutatis mutandis. As noted in Section~\ref{sec:perturbation}, the weak comparison principle operates on the basis of the maximum principle for elliptic operators. The latter does indeed apply to bounded domains in general, and to boundary-anchored hypersurfaces in AdS particular (see~\cite{EngFis19} for a discussion in the context of AdS/CFT). The main potential source of difficulty is the falloff: both the causal surface and the outermost extremal surface approach the asymptotic boundary, but not with the same tangent space; the asymptotic falloff of $\delta g$ must approach zero sufficiently fast so as to not spoil the asymptotics while bridging the gap between the two surfaces. We expect that the appropriate falloff conditions can be satisfied, but this remains a subject for future work.

\subsection{Robustness under Quantum Corrections}
Fundamentally, the classical calculations done in this paper are only interesting as an approximation to the fully quantum dynamics that actually describe the bulk in AdS/CFT. Do our arguments extend to the semiclassical setting where the background spacetime is still treated classically, but with quantum fields propagating on it? Do they generalize to the  regime where  perturbative corrections to the geometry, suppressed by powers $G\hbar$, are allowed to contribute? A number of important assumptions break down in this case: Raychaudhuri's equation is still valid, but the null energy condition will not generally hold, and so light rays emanating from a classical extremal surface can defocus. Fortunately, the quantum focusing conjecture (QFC) states that the generalized entropy of null congruences emanating from QESs is always subject to focusing. In a semiclassical or perturbatively quantum bulk, the appetizer is the outermost quantum, not classical, extremal surface; the QFC ensures that the outermost quantum extremal wedge always contains the causal wedge.\footnote{Apparent counterexamples are only possible when the time evolution includes interaction with an auxiliary system; once you include these auxiliary interacting modes when defining quantum extremality then the apparent causality issues go away.} The question is whether we can still expand the causal wedge using appropriate sources and timefolds in order to bridge the gap between the causal surface and the appetizer.

This is a much harder question than the classical question discussed here: the class of allowed QFT states is simply much harder to classify and use than classical field theory states. However, in particularly simple examples, for instance where the causal wedge is approximately Rindler-like, a quantum version of the ``left stretch'' appears to be well defined and gives exactly the right change in energy to reduce focusing and remove the perturbatively small, i.e. $O(1/N)$, distance between the causal and outermost extremal wedges~\cite{Levine:2020upy}. It is feasible that in the limit of many zigzags, the causal wedge would approach a Rindler-like region, at which point it becomes possible to apply the bulk unitaries discussed in \cite{Levine:2020upy} to eliminate the remaining small gap. We leave a detailed study of this to future work.

\subsection{Asymptotically Flat Spacetimes}
In asymptotically flat spacetimes, asymptotic infinity is lightlike rather than timelike. However, there do not seem to be any major obstructions to adapting the results of this paper to that setting. Instead of the timefolds of the asymptotically AdS problem, in asymptotically flat space one would presumably evolve forwards along future null infinity, then backwards with different boundary conditions that remove focusing at the past event horizon, in order to produce a state where the causal surface is very close to a past apparent horizon. Then one would evolve backwards and forwards along past null infinity in order to produce a state where the causal surface is very close to a future apparent horizon. At each step, the causal wedge increases in size. After sufficiently many such timefolds, the causal surface should approach the outermost extremal surface, as in the asymptotically AdS case. The interpretation of our results in the asymptotically flat case is naturally obfuscated by relatively inchoate status of flat holography. We may speculate that extremal surfaces are important more generally for defining entropy in gravity; it is also possible that a similar notion of a Python's lunch applies beyond AdS holography. We do not subscribe to any particular interpretation -- here we simply note that the technical aspects of this work are likely not restricted to AdS.

\subsection{Complexity Censorship}

Let us finish with a few comments on cosmic censorship, a prima facie unrelated conjecture about classical General Relativity. (Weak) cosmic censorship~\cite{Pen69} is essentially the statement that high curvature physics lies behind event horizons. One of its landmark consequences is that trapped surfaces lie behind event horizons, and that consequently marginally trapped and in particular extremal surfaces lie on or behind event horizons~\cite{HawEll}. It is clear from the above discussion that any violation of cosmic censorship would be quite problematic for the Python's lunch picture: if the nonminimal extremal surface could lie outside of the event horizon (or, in the quantum case, could communicate with $\mathscr{I}$), then operators behind would lie properly within the causal wedge and would thus be reconstructible by HKLL despite being exponentially complex. The Python's lunch proposal thus appears to depend heavily on the validity of cosmic censorship -- which is known to be violated in AdS~\cite{HorSan16, CriSan16, HorSan19}. As matters currently stand, violations of cosmic censorship notwithstanding, it is possible to prove that the holographic entanglement entropy prescription guarantees that trapped surfaces must lie behind event horizons~\cite{EngFol20}. We could however have proven the same statement from holographic complexity: marginally trapped surfaces (and therefore, trapped surfaces also, by the reasoning of~\cite{EngFol20}) must lie behind event horizons, for if they did not, operators behind the Python's lunch could be reconstructed in a simple procedure. This suggests that in AdS/CFT, aspects of cosmic censorship may be reformulated as ``complexity censorship'': that high complexity physics must be causally hidden and thus unable to causally communicate to $\mathscr{I}$.

\section*{Acknowledgments}
It is a pleasure to thank S. Alexakis, R. Bousso, S. Fischetti, L. Susskind for helpful discussions. NE is supported by NSF grant no. PHY-2011905, by the U.S. Department of Energy under grant no. DE-SC0012567 (High Energy Theory research), and by funds from the MIT physics department. GP is supported by the UC Berkeley physics department, the Simons Foundation through the "It from Qubit" program, the Department of Energy via the GeoFlow consortium and also acknowledges support from a J. Robert Oppenheimer Visiting Professorship. ASM is supported by the National Science Foundation under Award Number 2014215.

\appendix

\section{uv and vi constraints} \label{sec:appendix}

Here we write down the general structure of the perturbative constraints \eqref{eq-deltaRuv}, \eqref{eq-deltaRvi}, \eqref{eq-deltaRij}:
\bea
 \frac{1}{2}g^{ij} \partial_v \partial_u \delta g_{ij} = (\hat{L}_1)^{ab} \partial_u \delta g_{ab}+(\hat{L}_2)^{ab} \delta g_{ab} + 8\pi G \delta T_{u v} \label{eq-uv}\\
\partial_v \partial_u \delta g_{vi} = (\hat{L}_1')^{ab}_i \partial_u \delta g_{ab}+(\hat{L}_2')^{ab}_i \delta g_{ab} + 8\pi G \delta T_{v i}\label{eq-vi}\\
 \partial_v \partial_u \hat{\delta g_{ij}} = (\hat{L}_1'')^{ab}_{ij} \partial_u \delta g_{ab}+(\hat{L}_2'')^{ab}_{ij} \delta g_{ab}+ 8 \pi G \hat{\delta T_{ij}} \label{eq-ij}
\eea
where $\hat{L_n}$, $\hat{L'_n}$, and $\hat{L''_n}$ represent linear differential operators that depend only on the background metric and do not have $u$ derivatives. The only significance of Eqs. \eqref{eq-uv}, \eqref{eq-vi}, and \eqref{eq-ij} for us is that they are consistent with the $\delta g_{vv}$ solution to Eq. \eqref{eq-vvfinal}. Plugging in the $\delta g_{vv}$ solution reduces Eqs. \eqref{eq-uv}, \eqref{eq-vi}, and \eqref{eq-ij} to coupled linear differential equations involving $\partial_u \delta g_{ij}\rvert_{u=0}$ and $\partial_u \delta g_{vi}\rvert_{u=0}$ which are first order in $\partial_v$. Note that the initial conditions at $v=0$ are $\partial_u g_{ij}(u=0,v=0^-,y) = \partial_u g_{vi}(u=0,v=0^-,y)=0$, as enforced by the perturbation being localized away from $\mathscr{H}^-$.

\section{Alternative derivation of the vv constraint} \label{sec:appendixvv}

Here we provide an alternative derivation for the same perturbation discussed in Sec. \ref{sec:perturbation}. Instead of the transformations \eqref{eq-gijboost}, \eqref{eq-mattervvboost}, and \eqref{eq-kappaboost} on $\mathscr{H}^+_C$, we can equivalently insert an inaffinity shock at $v=0$~
\cite{BouCha19, BouCha20}:
\begin{align}\label{eq-kappashock}
    \kappa_{(v)} = (1-e^{-s}) \delta(v)
\end{align}
and take the $(1-e^{-s}) \sim \epsilon$ limit. In addition, we want to introduce the following $\delta g_{ab}$ transformation:
\begin{align}\label{eq-linearpert}
    ds^2 = -2 du dv + (g_{vv}+\delta g_{vv}) dv^2 + 2(g_{vi}+\delta g_{vi}) dv dy^i + (g_{ij} + \delta g_{ij}) dy^i dy^j,
\end{align}
with
\begin{align}
    \delta g_{vi}\rvert_{u=0}&=0\\
    \delta g_{ij}\rvert_{u=0}&=0\\
    \partial_u \delta g_{vv}\rvert_{u=0}&=0
\end{align}
We need to apply the combined perturbations in Eqs. \eqref{eq-kappashock} and \eqref{eq-linearpert} to the $vv$ constraint:
\begin{align}
    \delta G_{vv} + \Lambda \delta g_{vv} = 8\pi G \delta T_{vv}
\end{align}
The only contribution in $\delta G_{vv}$ from Eq. \eqref{eq-kappashock} is through the $\kappa_{(v)} \theta_{(v)}$ term in Eq. \eqref{eq-Rayray}. Summing this up with the contribution from the contribution from $\delta g_{ab}$ of Eq. \eqref{eq-linearpert}, we get:
\begin{align}
&-\frac{1}{2}\theta_{(u)} \partial_v \delta g_{vv} - \frac{1}{2} \nabla_{\perp}^2 \delta g_{vv}+ \chi^i \partial_i \delta g_{vv} +\nonumber\\
&\left(\nabla_{\perp}.\chi - \partial_v \theta_{(u)} - {B_{(v)}}_{ij} {B_{(u)}}^{ij} +8 \pi G (-T_{uv} -\mathcal{L}_{\text{matter}} + F_{uv}^2) \right)\delta g_{vv} \nonumber\\
&-\epsilon \theta_{(v)}[C] \delta (v) = 0
\end{align}
Similarly, the uv and vi constraints could be analyzed resulting in Eqs. \eqref{eq-uv} and \eqref{eq-vi}.

\bibliographystyle{jhep}
\bibliography{all}

\end{document}